\newtheorem{theorem}{Theorem}
\newtheorem{lemm}{Lemma}
\newtheorem{coro}{Corollary}
\begin{document}
\allowdisplaybreaks
\setlength{\abovedisplayskip}{6pt}
\setlength{\belowdisplayskip}{6pt}

\title{On the Age of Information in  Multi-Source  Queueing Models
\thanks{Mohammad Moltafet and Markus Leinonen are with the Centre
for Wireless Communications--Radio Technologies, University of Oulu,
90014 Oulu, Finland (e-mail: mohammad.moltafet@oulu.fi; markus.leinonen@oulu.fi), and
Marian Codreanu is with Department of Science and Technology, Link\"{o}ping University, Sweden (e-mail: marian.codreanu@liu.se)
}
\author{
Mohammad~Moltafet, Markus~Leinonen, and Marian~Codreanu
}}
\maketitle
\begin{abstract}
Freshness of status update packets is essential for enabling services where a destination needs the most recent measurements of  various sensors. In this paper, we study the information freshness of single-server multi-source queueing models under a first-come first-served (FCFS) serving policy. In the considered model, each source independently generates status update packets according to a Poisson process. The information freshness of the status updates of each source is evaluated by the average age of information (AoI). We derive an exact expression for the average AoI for the case with exponentially distributed service time, i.e., for a multi-source M/M/1 queueing model. Moreover, we derive three approximate expressions for the average AoI for a multi-source M/G/1 queueing model having a general service time distribution. Simulation results are provided to validate the derived exact average AoI expression, to assess the tightness of the proposed approximations, and to demonstrate the AoI behavior for different system parameters.

\emph{Index Terms--} Information freshness, age of information (AoI), multi-source M/G/1 queueing model.
\end{abstract}

\section{Introduction}\label{Introduction}
Recently, various services in wireless sensor networks (WSNs) such as Internet of Things  and cyber-physical control applications have attracted both academic and industrial attention. In these networks, low power sensors may be assigned to  send status updates about a random process to intended  destinations \cite{corke2010environmental,5307471,818826,hu2002ensuring,8187436,8469047}. Such a status update system can monitor, e.g., temperature of a specific environment (room, greenhouse, etc.) \cite{corke2010environmental}, and a vehicular status (position, acceleration, etc.) \cite{5307471}. One key enabler  for these services is high freshness of the sensors' information at a destination. For instance, real-time control and decision making in the system requires that the destination has very recent measurements of the various sensors.

The traditional  metrics such as throughput and delay can not fully characterize the  information freshness \cite{6195689,8187436,8469047}.
Recently, the age of information (AoI) was proposed as a destination-centric  metric   to measure the information freshness \cite{6195689,6310931,5984917} in status update systems.
A status update packet contains the measured value of a monitored process and a time stamp representing the time when the sample was generated. Due to wireless channel access, channel errors, and fading etc., communicating a status update packet through the network experiences a random delay. If at a time instant $t$, the most recently received status update packet contains the time stamp $U(t)$, AoI is defined
as the random process $\Delta(t)=t-U(t)$.
Thus, the AoI measures for each sensor   the time elapsed since the last received status update
packet was generated at the sensor.  The most common metrics of the AoI are   average AoI,  peak AoI, and effective AoI   \cite{8187436,8006593,6875100}. In this work, we focus on the average AoI.

\subsection{Related Works}
The first queueing theoretic work on AoI is \cite{6195689} where the authors derived the average AoI for a single-source M/M/1 first-come first-served (FCFS) queueing model. The average  AoI for an  M/M/1 last-come first-served  (LCFS) queueing model with preemption was analyzed in \cite{6310931}.
In \cite{6875100}, the authors proposed peak AoI as an alternative  metric to evaluate the information freshness.
The average AoI  and average peak AoI for different packet management policies in an M/M/1 queueing model were derived in \cite{7415972}. The authors of \cite{8006504}  derived a closed-form expression for the average
AoI of a single-source  M/G/1/1 preemptive queueing model (where the last entry in the Kendall notation shows the total capacity of the queueing system; 1 indicates that there is one packet under service whereas the queue holds zero packets).
A closed-form expression for the average AoI in a single-source M/G/1 queueing model was derived in \cite{8006592}.
The work \cite{7541764} considered a single-source LCFS queueing model where the packets arrive according to a Poisson process and the service time follows a gamma distribution. They derived the average AoI and average peak AoI for two packet management policies, LCFS  with and  without preemption.

Besides single-source setups, the work \cite{6284003} was the first to investigate the average AoI in a multi-source setup. The authors of \cite{6284003} derived the average AoI for a multi-source M/M/1 FCFS queueing model.
The authors of \cite{7282742} considered a multi-source M/G/1 queueing system  and optimized the arrival rates of each source to minimize the peak AoI.
The closed-form expressions for the average AoI and average peak AoI in a multi-source M/G/1/1 preemptive queueing model were derived in \cite{8406928}.
In \cite{8469047}, the authors  introduced a powerful technique based on stochastic hybrid systems to evaluate the AoI in finite-state continuous-time queueing systems.

The  AoI has also been applied as a novel metric in various networking problems. The AoI in a carrier sense multiple access  (CSMA) based vehicular network was studied via simulations in \cite{5984917}. The authors of \cite{Kosta2018AgeOI} studied AoI and throughput in a shared access network having one primary and several secondary transmitter-receiver pairs.  The authors of \cite{8006544} investigated the AoI for  ALOHA and time-scheduled based access techniques in  WSNs. They concluded that ALOHA access,  while simple, leads to AoI that is inferior to a scheduled access case. The authors of \cite{8445979} considered a  WSN, derived the average AoI and peak AoI for the system, and minimized  the average AoI and  peak AoI    by optimizing the probability of transmission of each node.  The authors of \cite{9007478} analyzed the AoI in a CSMA based system using the stochastic hybrid systems technique.
 They optimized the system's average AoI by adjusting the back-off time of each link. The authors of \cite{8877239}  analyzed the worst case average AoI for each sensor in a CSMA based WSN.

\subsection{Contributions}
In this paper, we analyze the average AoI of the different sources in single-server multi-source queueing models under an FCFS service policy with Poisson packet arrivals.
First,  derive  an \emph{exact} expression for the average AoI for a multi-source M/M/1 queueing model. The setup was earlier addressed in \cite{6284003,8469047}, where the authors derived an approximate expression for the  average AoI by neglecting the statistical dependency between certain random variables (see Section \ref{Exact Expression for the Average AoI in a Multi-Source M/M/1 Queueing Model}). Second,  we point out the difficulties in an  M/G/1 case and  derive three \emph{approximate} expressions for the average AoI in  a multi-source M/G/1 queueing model. We present simulation results to 1) validate the derived exact average AoI in a multi-source M/M/1 queueing model, 2) show that the proposed approximations are relatively tight in both the M/M/1 case and the M/G/1 case where the service time follows different distributions, and 3) exemplify the AoI behavior under different system parameters.

\subsection{Organization}
This paper is organized as follows. The system model,   AoI definition, and a summary of the main results  are  presented in Section II.
The main steps required to derive the average AoI for a multi-source M/G/1 queueing model are presented in Section III.
The exact expression for the average AoI in a multi-source M/M/1 queueing model is derived in Section \ref{Exact Expression for the Average AoI in a Multi-Source M/M/1 Queueing Model}.
The three approximate expressions for the average AoI in a multi-source M/G/1 queueing model are derived in Section \ref{An Approximation for the Average AoI in a Multi-Source M/G/1 Queueing Model}.
Numerical validation and results are presented in Section \ref{Validation and results}.
Finally, concluding remarks are expressed in Section \ref{conclusion}.

\section{System Model and Summary of Results }
We  consider a system consisting of a set of independent sources denoted by $\mathcal{C}=\{1,\dots,C\}$ and one server, as depicted in Fig. \ref{Ao00101I}.
Each source observes a random process, representing, e.g., temperature, vehicular speed or location at random time instants. A remote destination is interested in timely information about the status of these random processes. Status updates are transmitted as packets, containing the measured value of the monitored process and a time stamp representing the time when the sample was generated. We assume that  the packets  of source  $c$  are generated according to the   Poisson process with rate $\lambda_c$, $c\in \mathcal{C}$.

For each source, the AoI  at the destination is defined as the time elapsed since the last  successfully received packet was generated. Formal definition of the AoI is given next.

\begin{figure*}
\centering
\includegraphics[width=0.53\linewidth,trim = 5mm 0mm 5mm 5mm,clip]{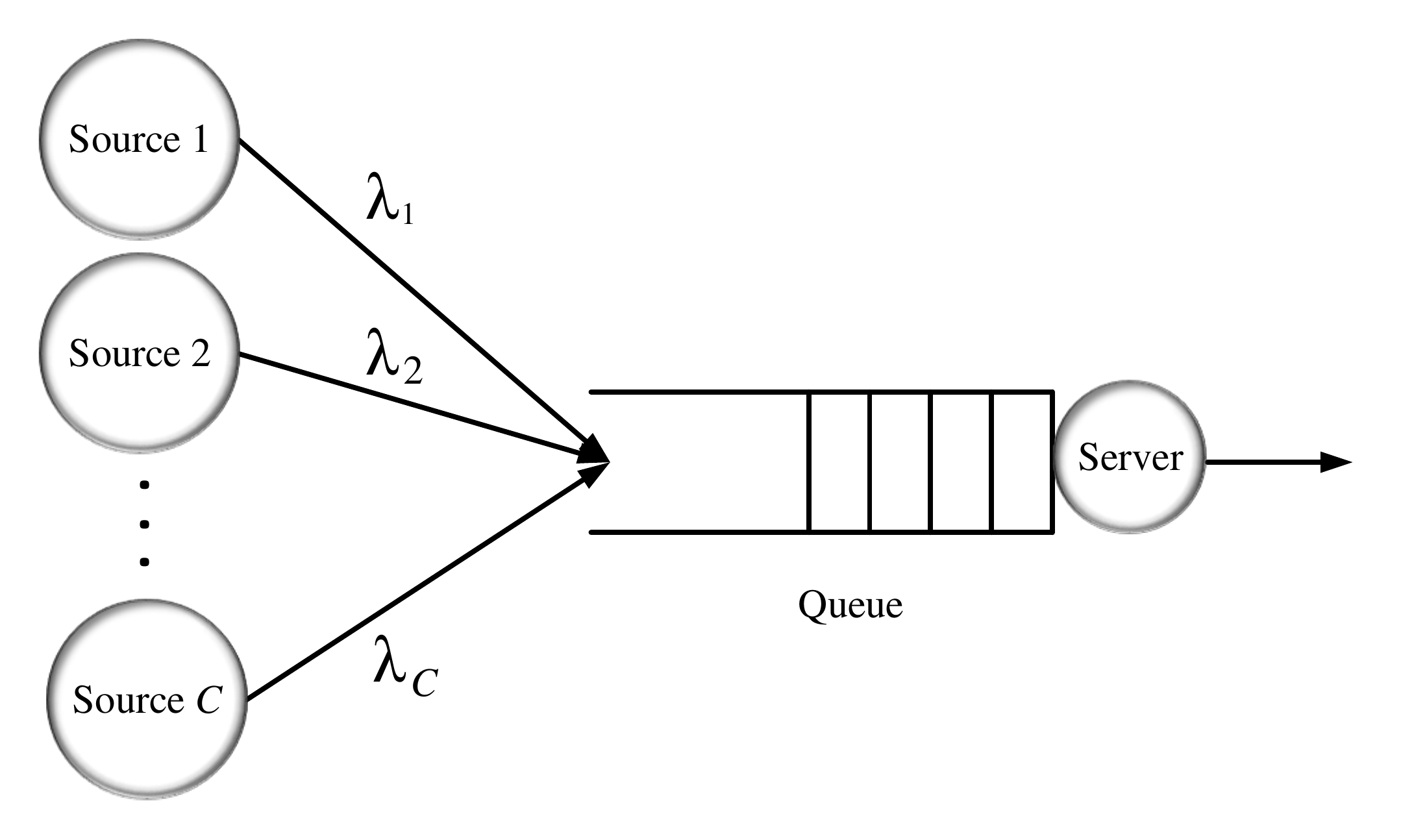}
\caption{The considered status update system modeled as a multi-source M/G/1 queueing model.
}\vspace{-12mm}
\label{Ao00101I}
\end{figure*}

\textbf{Definition 1} (AoI).
Let $t_{c,i}$ denote the time instant at which the $i$th status   update packet of source $c$ was generated, and $t'_{c,i}$ denote the time instant at which this packet  arrives at the destination. At a time instant $\tau$,  the  index of the most recently received packet of source $c$ is given by
\begin{equation}\label{mnb00}
N_c(\tau)=\max\{i'|t'_{c,i'}\le \tau\},
\end{equation}
and the time stamp of the most recently received packet of source $c$ is
$
U_c(\tau)=t_{c,N_c(\tau)}.
$
The AoI of source $c$ at the destination is defined as the random process
$
\Delta_{c}(t)=t-U_c(t).
$

An example of evolution of the AoI is shown in Fig. \ref{AoI}. As it can be seen, $ \Delta_{c}(t) $ at the destination increases linearly with time, until the
reception of a new status update, when the AoI is reset to the age of the newly received status update, i.e., the difference of the current time instant and the time stamp of the newly received update.

The   most commonly used  metric  for evaluating the AoI of a source at the destination is the average AoI  \cite{8187436,8006593,6875100}.  Next, we introduce this  metric for the considered system model.

\begin{figure}
	\centering
	\includegraphics[scale=.8]{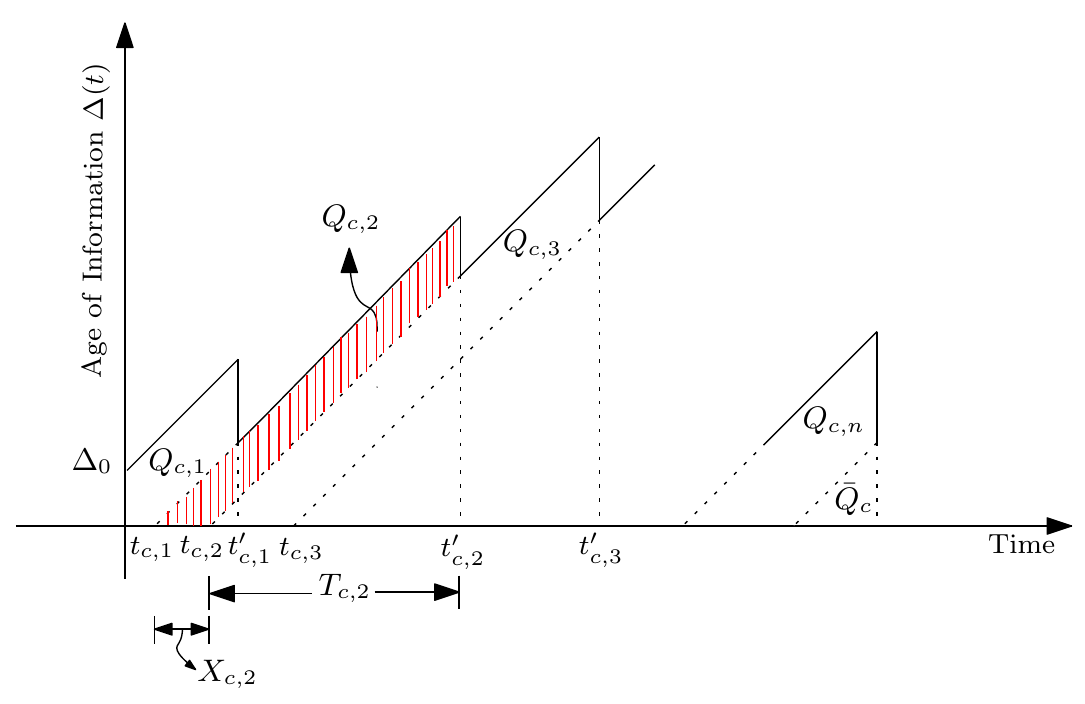}
	\vspace{-5mm}
	\caption{Age of information of source $c$ as a function of time.}
	\vspace{-13mm}
	\label{AoI}
\end{figure}
\subsection{Average AoI}
Let $(0,\tau)$ denote   an observation interval. Accordingly,  the time average AoI of  the source $c$ at the destination, denoted as  $\Delta_{\tau,c}$, is  defined as
\begin{equation}\label{oointr}
\Delta_{\tau,c}=\dfrac{1}{\tau}\int_{0}^{\tau}\Delta_{c}(t)\mathrm{d}t.
\end{equation}
The integral in \eqref{oointr} is equal to  the area under $\Delta_{c}(t)$ which can be  expressed  as a sum of disjoint areas determined by  a polygon
$ Q_{c,1} $, $ N_c(\tau)-1 $ trapezoids  $Q_{c,i}, i=2,\dots,N_c(\tau)$, and  a triangle  $\bar{Q}_c$, as illustrated in Fig. \ref{AoI}.
Following the definition of  $N_c(\tau)$ in \eqref{mnb00}, $\Delta_{\tau,c}$ can be calculated as
\begin{equation}\label{oointr01}
\Delta_{\tau,c}=\dfrac{1}{\tau}\bigg(Q_{c,1}+\textstyle\sum_{i=2}^{N_c(\tau)}Q_{c,i}+\bar{Q}_c\bigg)=\dfrac{Q_{c,1}+\bar{Q}_c}{\tau}+\dfrac{N_c(\tau)-1}{\tau}\dfrac{1}{N_c(\tau)-1}\textstyle\sum_{i=2}^{N_c(\tau)}Q_{c,i}.
\end{equation}

The  average AoI of source $c$, denoted by $\Delta_{c}$, is defined as
$
\Delta_{c}=\lim_{\tau\to\infty}\Delta_{\tau,c}.
$
 The term $\dfrac{Q_{c,1}+\bar{Q}_c}{\tau}$ in \eqref{oointr01} goes to zero as $\tau\to\infty$, and the term $\dfrac{N_c(\tau)-1}{\tau}$ in \eqref{oointr01}  converges to the  rate of generating the status update packets of source $c$ as ${\tau\to\infty}$, i.e., $\lambda_c=\lim_{\tau\to\infty}\dfrac{N_c(\tau)-1}{\tau}$.  Moreover, as  ${\tau\to\infty}$, the number of transmitted packets grows
to infinity, i.e., ${N_c(\tau)\to\infty}$.
{Thus, assuming that the random process $ \{Q_{c,i}\}_{i>1} $ is (mean) ergodic\footnote{{Note that for the ergodicity assumption, it is necessary to have a stationary and  stable system (for the stability condition it is sufficient to have  $\textstyle\sum_{c\in\mathcal{C}}\lambda_{c}< \mu,$ where $\mu$ is the mean service rate in the system).}} \cite{6195689,8187436,8469047},
%
%
 the sample average term $\dfrac{1}{N_c(\tau)-1}\textstyle\sum_{i=2}^{N_c(\tau)}Q_{c,i}$ in \eqref{oointr01}  converges to the stochastic average $\mathbb{E}[Q_{c,i}]$.} Consequently, $\Delta_{c}$ is given by
%
%
$$
\Delta_{c}=\lambda_c\mathbb{E}[Q_{c,i}].
$$

As shown in Fig. \ref{AoI},  $Q_{c,i}$ can be calculated by subtracting the area of the isosceles triangle with sides $(t'_{c,i}-t_{c,i})$ from the area of the isosceles triangle with sides $(t'_{c,i}-t_{c,i-1})$. Let the random variable
\begin{equation}\label{mn120}
X_{c,i}=t_{c,i}-t_{c,i-1}
\end{equation}
represent the $i$th interarrival time of source $c$, i.e., the time elapsed between the generation of $i-1$th packet and $i$th packet from source $c$. From here onwards, we refer to the $i$th packet from source $c$ simply as packet $c,i$. Moreover, let the random variable
\begin{equation}\label{mn1201}
T_{c,i}=t'_{c,i}-t_{c,i}
\end{equation}
represent the system time of packet $c,i$, i.e., the time interval the packet spends in the system which consists of the sum of the  waiting time and the service time. 
	 By using \eqref{mn120} and \eqref{mn1201},  $Q_{c,i}$ can be calculated by subtracting the area of the isosceles triangle with sides $X_{c,i}$ from the area of the isosceles triangle with sides $X_{c,i}+T_{c,i}$ (see Fig. \ref{AoI}), and thus, the average AoI of source $c$ is given as{\cite{6284003}}
\begin{equation}\label{oointr001003}
\Delta_{c}=\lambda_c\mathbb{E}[Q_{c,i}]=\lambda_c\bigg(\dfrac{1}{2}\mathbb{E}[(X_{c,i}+T_{c,i})^2]-\dfrac{1}{2}\mathbb{E}[X_{c,i}^2]\bigg)
=\lambda_c\bigg(\dfrac{\mathbb{E}[X_{c,i}^2]}{2}+\mathbb{E}[X_{c,i} T_{c,i}]\bigg).
\end{equation}

Let $W_{c,i}$ be the random variable representing the waiting time of packet $c,i$, and $S_{c,i}$ the random variable representing the service time of packet $c,i$. Consequently, the system time $T_{c,i}$ is given as the sum $T_{c,i}=W_{c,i}+S_{c,i}$, and the average AoI in \eqref{oointr001003} can be written as
\begin{equation}\label{oointr00103}
\Delta_{c}
=\lambda_c\bigg(\dfrac{\mathbb{E}[X_{c,i}^2]}{2}+\mathbb{E}[X_{c,i} (W_{c,i}+S_{c,i})]\bigg).
\end{equation}

{\subsection{Summary of the Main Results}\label{Summary of the Main Results}
	Here, we briefly summarize the main results of the paper. To evaluate the AoI of one source in a queueing model with multiple sources of Poisson arrivals, we can consider two sources without loss of generality. Thus, we proceed to evaluate the AoI of source 1 by aggregating the other ${C-1}$ sources into source 2 having the Poisson arrival rate ${\lambda_2=\textstyle\sum_{c'\in\mathcal{C}\setminus\{1\}}\lambda_{c'}}$. The mean service time for each packet in the system is equal, given as $\mathbb{E}[S_{1,i}]=\mathbb{E}[S_{2,i}]={1}/{\mu}$, $\forall{i}$. Let  $\rho_1={\lambda_1}/{\mu}$ and $\rho_2={\lambda_2}/{\mu}$ be the  load of  source 1 and 2, respectively. Since packets of each source are generated  according to the Poisson process and the sources are independent, the  packet generation in the system follows the Poisson process with rate
	$\lambda=\lambda_1+\lambda_2$, and the overall load in the system is
	$\rho=\rho_1+\rho_2={\lambda}/{\mu}$. Since we do not assume any specific probability density function (PDF) for the service time, the considered model is referred to a multi-source M/G/1 queueing model.}

{The main contributions of this paper are  twofold: we derive 1) an \textit{ exact} expression for the average AoI for a multi-source M/M/1 queueing model and 2) propose three \textit{approximate} expressions for the average AoI in  a multi-source M/G/1 queueing model. The derived results are summarized as follows.}

{\begin{theorem}\label{exactmm1theorem}
		The exact expression for the average AoI of source 1 for a multi-source M/M/1 queueing model is given in \eqref{mm10a0}  and has the following form:
		\begin{align}\nonumber
			\Delta_1&\!= \!\lambda^2_1(1-\rho)\Psi(\mu,\rho_1,\lambda_2)\!+\!\dfrac{1 }{\mu}\bigg(\dfrac{1 }{\rho_1}\!+\!
			\dfrac{\rho}{1-\rho}\!+\!\dfrac{(2\rho_2-1)(\rho - 1)}{(1-\rho_2)^2}+\dfrac{2\rho_1\rho_2(\rho - 1)}{(1-\rho_2)^3}\bigg),
		\end{align}
		where $\Psi(\mu,\rho_1,\lambda_2)$ is a function that is characterized by transient behavior of an M/M/1 queue which is presented in \eqref{01mnk}.
	\end{theorem}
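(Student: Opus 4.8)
My plan is to evaluate the three expectations appearing in \eqref{oointr00103} for $c=1$ in the M/M/1 case. Two of them are immediate. Since the source-1 arrivals form a Poisson process of rate $\lambda_1$, the interarrival time $X_{1,i}$ is $\mathrm{Exp}(\lambda_1)$ and $\mathbb{E}[X_{1,i}^2]=2/\lambda_1^2$; and since the service times are i.i.d.\ and independent of the arrival processes, $X_{1,i}$ and $S_{1,i}$ are independent, so $\mathbb{E}[X_{1,i}S_{1,i}]=\mathbb{E}[X_{1,i}]\mathbb{E}[S_{1,i}]=1/(\lambda_1\mu)$. These two contribute the $\frac{1}{\mu}\cdot\frac{1}{\rho_1}$ and (after combining with a piece of the cross term) the $\frac{1}{\mu}$-type terms in the statement. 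The entire difficulty is the cross term $\mathbb{E}[X_{1,i}W_{1,i}]$: this is precisely where the statistical dependence neglected in \cite{6284003,8469047} enters, because the waiting time of packet $1,i$ is governed by the work still present from packet $1,i-1$, which arrived only $X_{1,i}$ time units earlier.

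To handle the cross term I would first observe that, because all service times are $\mathrm{Exp}(\mu)$ and hence memoryless, $\mathbb{E}[W_{1,i}\mid M]=M/\mu$, where $M$ is the number of packets found in the system by packet $1,i$; thus $\mathbb{E}[X_{1,i}W_{1,i}]=\tfrac{1}{\mu}\mathbb{E}[X_{1,i}M]$. Next I would describe how $M$ is generated. By PASTA applied to the (Poisson) source-1 sub-stream, packet $1,i-1$ finds $L$ packets in the system with $\Pr(L=k)=(1-\rho)\rho^k$, and after it joins there are $L+1$ packets. During the subsequent interval of length $X_{1,i}$ there are, by definition, no further source-1 arrivals, so the number of packets in the system evolves as a transient M/M/1 birth–death process with arrival rate $\lambda_2$ and service rate $\mu$, started from $L+1$ and observed after time $X_{1,i}$; here exponential service is what keeps this count process Markovian. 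Since $X_{1,i}\sim\mathrm{Exp}(\lambda_1)$ is independent of $L$ and of the source-2 arrivals after $t_{1,i-1}$, this gives
$$\mathbb{E}[X_{1,i}M]=\sum_{k\ge 0}(1-\rho)\rho^k\int_0^\infty \lambda_1\,x\,e^{-\lambda_1 x}\,m_{k+1}(x)\,\mathrm{d}x,$$
where $m_j(x)$ denotes the mean of that transient M/M/1 queue length at time $x$ started from state $j$.

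The core computation is then to evaluate $m_j(x)$ and carry out the integral and the geometric sum. I would use the classical transient description of the M/M/1 queue — equivalently $m_j(x)=j+(\lambda_2-\mu)x+\mu\int_0^x p_{0,j}(u)\,\mathrm{d}u$ with $p_{0,j}$ the transient empty-system probability — so that $\int_0^\infty \lambda_1 x e^{-\lambda_1 x}m_j(x)\,\mathrm{d}x$ reduces to evaluating a Laplace transform and its derivative at $s=\lambda_1$, after which the integrand is rational in $\sqrt{(\lambda_1+\lambda_2+\mu)^2-4\lambda_2\mu}$. The resulting expression splits into a part that sums in closed form against the geometric weights — producing the explicit rational terms in $\rho$, $\rho_1$, $\rho_2$ in the statement — and a residual part collected into the function $\Psi(\mu,\rho_1,\lambda_2)$ of \eqref{01mnk}; multiplying through by $\lambda_1$ as in \eqref{oointr00103} finishes the proof. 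I expect the main obstacle to be exactly this last block: obtaining a workable closed form for the Laplace transform of the transient mean queue length started from an arbitrary state $j$, and then keeping the state-dependent geometric sum over $j$ under control so that it separates cleanly into the stated rational terms plus $\Psi$. The delicate bookkeeping of the cancellations — for instance the $(\lambda_2-\mu)x$ drift term cancelling against the large-$x$ slope of $\mu\int_0^x p_{0,j}(u)\,\mathrm{d}u$, and the $j$-dependence of the transient empty-probability telescoping through the $\rho$-geometric sum — is where essentially all of the care is needed.
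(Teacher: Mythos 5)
Your reduction of the cross term is sound and takes a genuinely different route from the paper. The paper never collapses $W_{1,i}$ to ``number found in system over $\mu$''; instead it splits on the brief/long events \eqref{0011nm}, works with the conditional joint PDFs of $(X_{1,i},T_{1,i-1})$ given each event, evaluates the brief-event terms \eqref{mnb50} and \eqref{nmb00} via Laplace transforms of the system-time PDF, and reserves the transient M/M/1 analysis for the long event only: there it conditions on the number $J^{\mathrm{L}}_{2,i}$ of source-2 packets left at the departure of packet $1,i-1$ (Poisson with mean $\lambda_2 t$) and propagates it through the transition kernel $\bar{P}_{m|j}(x-t)$ of \eqref{Pkj}, which is what produces the double integral $\Psi$ in \eqref{01mnk}. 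Your observation that, for exponential service, $\mathbb{E}[W_{1,i}\mid M,X_{1,i}]=M/\mu$ lets you treat both events at once: PASTA on the source-1 substream gives the geometric law of $L$, the count over $(t_{1,i-1},t_{1,i})$ is a Markov birth--death process with rates $(\lambda_2,\mu)$, and $X_{1,i}$ is indeed independent of $L$ and of the post-$t_{1,i-1}$ source-2 arrivals, so $\mathbb{E}[X_{1,i}M]$ becomes a geometric mixture of $\int_0^\infty\lambda_1 x e^{-\lambda_1 x}m_{k+1}(x)\,\mathrm{d}x$. This is essentially the same memorylessness step the paper uses in \eqref{WL_MM1}, applied one arrival earlier so that the brief/long case split disappears. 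What the paper's route buys is that the brief-event terms come out in closed form for a \emph{general} service distribution (which is what feeds the approximations of Section V); what your route buys is a single, cleaner object to analyze in the M/M/1 case.

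The caveat is that you stop before the computation that actually produces the stated formula. The rational terms in $\rho,\rho_1,\rho_2$ and the residual must come out of evaluating $\mu\int_0^\infty\lambda_1 x e^{-\lambda_1 x}\int_0^x p_{0,j}(u)\,\mathrm{d}u\,\mathrm{d}x$ through the transform of the transient empty probability and summing its geometric $j$-dependence against the weights $(1-\rho)\rho^k$ with $j=k+1$; you flag this as the main obstacle but do not carry it out, so the explicit coefficients in the theorem are not derived. Moreover, your leftover term will naturally be a single Laplace-transform expression, not the double integral \eqref{01mnk}; since the theorem defines $\Psi(\mu,\rho_1,\lambda_2)$ by that specific formula, you still owe an identification of your residual with $\lambda_1^2(1-\rho)\Psi(\mu,\rho_1,\lambda_2)$ (or, equivalently, a verification that your expression agrees with \eqref{mm10a0}). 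As written, the argument sets up a correct and arguably more streamlined framework but does not yet close the gap to the displayed result.
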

	\textit{Proof:}
	The proof of Theorem \ref{exactmm1theorem} appears in parts in  Sections \ref{AoI_Multi} and \ref{Exact Expression for the Average AoI in a Multi-Source M/M/1 Queueing Model} of this paper.}

{The three approximate expressions for the average AoI of source 1 for a multi-source M/G/1 queueing model, denoted by $\Delta^{\text{app}_1}_{1}, \Delta^{\text{app}_2}_{1}$, and $\Delta^{\text{app}_3}_{1} $, 
	are given in \eqref{main542}, \eqref{main543}, and \eqref{main5433}, and are of the following form:
	\begin{align}\nonumber
		\Delta^{\text{app}_1}_{1}&\approx\mathbb{E}[W]+\dfrac{2}{\mu}+\dfrac{2\rho_{2}-1}{\lambda_1}+ \dfrac{2(1-\rho_{2})}{\lambda_1}L_{T}(\lambda_1)+(\rho_{2}-1)L'_{T}(\lambda_1).
	\end{align}
	\begin{align}\nonumber
		\Delta^{\text{app}_2}_{1}&\approx\mathbb{E}[W]+\dfrac{2}{\mu}+\dfrac{2\rho_{2}-1}{\lambda_1}+ \left(\dfrac{1}{\mu}+\dfrac{2(1-\rho_{2})}{\lambda_1}\right)L_{T}(\lambda_1)+\left(\rho_{2}-1-\dfrac{\lambda_1}{\mu}\right)L'_{T}(\lambda_1).
	\end{align}
	\begin{align}\nonumber
		\Delta^{\text{app}_3}_{1}&\approx\mathbb{E}[W]+\dfrac{2}{\mu}\!+\dfrac{2\rho_{2}-1}{\lambda_1}+ \left(\dfrac{\lambda_2\mathbb{E}[S^2]}{2(1-\rho_2)}+\dfrac{2(1-\rho_{2})}{\lambda_1}\right)L_{T}(\lambda_1)+\\&\nonumber\left(2\rho_{2}-1-\dfrac{\lambda_1\lambda_2\mathbb{E}[S^2]}{2(1-\rho_2)}\right)L'_{T}(\lambda_1)-\lambda_1\rho_2L''_{T}(\lambda_1),
	\end{align}
	where $ \mathbb{E}[W] $ is the average waiting time of each packet in the system (which is given in \eqref{queue000}),  $L_{T}(\lambda_1)$ is the Laplace transform of the PDF of the system time (which is given in \eqref{000bn0v001b0012}), and $L'_{T}(\lambda_1)$ and $L''_{T}(\lambda_1)$ are the first and second derivative of $L_{T}(\cdot)$ at $\lambda_1$.   
	The  calculations to derive the approximate expressions are presented in Sections  \ref{AoI_Multi} and \ref{An Approximation for the Average AoI in a Multi-Source M/G/1 Queueing Model} of this paper.}

\section{AoI in a Multi-Source M/G/1 Queueing Model}\label{AoI_Multi}

{In this section, we present the main steps required to derive the average AoI in (7) for the considered multi-source M/G/1 queueing model and point out the main difficulties regarding the average AoI calculation. Then, in Section  IV, we derive the  exact expression for the M/M/1 case and in Section V, we derive the approximate expressions for the M/G/1 case with a general service time distribution.}   


 The first term  in \eqref{oointr00103} is easy to compute. Namely, since  the  interarrival time of source $1$ follows the   exponential distribution with parameter $\lambda_1$, we have  $\mathbb{E}[X_{1,i}^2]=2/\lambda_1^2$. The second term in \eqref{oointr00103} can be written as
\begin{align}\label{mbn0}
\mathbb{E}[X_{1,i} (W_{1,i}+S_{1,i})]\stackrel{(a)}{=}\mathbb{E}[X_{1,i} W_{1,i}]+\mathbb{E}[X_{1,i}]\mathbb{E}[S_{1,i}]=\mathbb{E}[X_{1,i} W_{1,i}]+\dfrac{1}{\lambda_1\mu},
\end{align}
where equality (a) follows because  the interarrival time and service time   of the packet $1,i$  are independent. Since the random variables $X_{1,i}$ and $W_{1,i}$ are dependent, the  most challenging part in calculating \eqref{oointr00103} is $\mathbb{E}[X_{1,i} W_{1,i}]$ which is derived in the following.
%

In order to calculate $\mathbb{E}[X_{1,i} W_{1,i}]$, we follow the approach of  \cite{6284003} and characterize the waiting time $W_{1,i}$  by means of  two events $E^{\mathrm{B}}_{1,i}$ and $E^{\mathrm{L}}_{1,i}$ as
\begin{align}\label{0011nm}
&E^{\mathrm{B}}_{1,i}=\big\{T_{1,i-1}\ge X_{1,i}\big\},\quad E^{\mathrm{L}}_{1,i}=\big\{T_{1,i-1}<X_{1,i}\big\}.
\end{align}
Here, \textit{brief event} $E^{\mathrm{B}}_{1,i}$ is the event where the interarrival time of packet $1,i$ is brief, i.e., the interarrival time of packet $1,i$ is shorter than the system time of  packet $1,i-1$. On the contrary, \textit{long event} $ E^{\mathrm{L}}_{1,i} $ refers to the complementary event where  the interarrival time of packet $1,i$ is long, i.e., the interarrival time of packet $1,i$ is longer than the system time of  packet $1,i-1$.

Next, we characterize the waiting time for packet $1,i$.
Under the event $E^{\mathrm{B}}_{1,i}$, the waiting time of packet $1,i$ ($W_{1,i}$) contains two terms: 1) the residual system time to complete serving packet $1,i-1$, and 2) the sum of service times of the source 2 packets that arrived during $ X_{1,i} $ and   must be served before  packet $1,i$  according to the FCFS policy (see Fig. \ref{AoI021}(a)). Under the event $E^{\mathrm{L}}_{1,i}$, the waiting time of packet $1,i$ contains two terms: 1) the possible residual service time of a source 2 packet that is under service at the arrival instant of packet ${1,i}$, and 2) the sum of service times of source 2 packets in the queue that  must be served before packet $1,i$ according to the FCFS policy (see Fig. \ref{AoI021}(b)). 
{For the  event $E^{\mathrm{B}}_{1,i}$, let
	\begin{align}\label{mn0120}
	{R_{1,i}^{\mathrm{B}}=T_{1,i-1}-X_{1,i}}
	\end{align}
	represent the residual system time to complete serving packet $1,i-1$ and let
	\begin{align}\label{mn0121}
	S_{1,i}^{\mathrm{B}}=\textstyle\sum_{i'\in \mathcal{M}_{2,i}^{\mathrm{B}}}S_{2,i'}
	\end{align}
	represent the sum of service times of source 2 packets that arrived during $ X_{1,i} $ and  must be served before  packet $1,i$ where $\mathcal{M}_{2,i}^{\mathrm{B}}$ is  the set of indices of queued packets of source $2$ that must be served before packet $1,i$ under the event $E_{1,i}^{\mathrm{B}}$, where $|\mathcal{M}_{2,i}^{\mathrm{B}}|=M_{2,i}^{\mathrm{B}}$.
	Similarly for the event $E^{\mathrm{L}}_{1,i}$, let
	\begin{align}\label{mn0122}
	S_{1,i}^{\mathrm{L}}=\textstyle\sum_{i'\in \mathcal{M}_{2,i}^{\mathrm{L}}}S_{2,i'}
	\end{align}
	represent the sum of service times of source 2 packets that
	must be served before  packet $1,i$ where $\mathcal{M}_{2,i}^{\mathrm{L}}$ is  the set of indices of packets of source $2$ that are in the queue (but not under service) at the arrival instant of packet $1,i$ conditioned on the event $E_{1,i}^{\mathrm{L}}$ and, thus,  must be served before packet $1,i$, where  $|\mathcal{M}_{2,i}^{\mathrm{L}}|=M_{2,i}^{\mathrm{L}}$. Thus, by means of the two events in \eqref{0011nm} and  definitions \eqref{mn0120}, \eqref{mn0121}, and \eqref{mn0122}, the waiting time for packet $1,i$ can be expressed as
	\begin{align}\label{bnvb01}
	W_{1,i}&=\begin{cases}
	S_{1,i}^{\mathrm{B}}+R_{1,i}^{\mathrm{B}},&E^{\mathrm{B}}_{1,i}\\
	S_{1,i}^{\mathrm{L}}+R^{\mathrm{L}}_{2,i},&E^{\mathrm{L}}_{1,i},
	\end{cases}
	\end{align}
	where $R^{\mathrm{L}}_{2,i}$ is a random variable that represents the possible residual service time of the   packet of source 2 that is under service at the arrival instant of packet ${1,i}$ conditioned on the event  $E_{1,i}^{\mathrm{L}}$.}


%

\begin{figure}[t!]
	\centering
	\subfloat[]{\includegraphics[width=0.53\linewidth,trim = 0mm 0mm 0mm 0mm,clip]{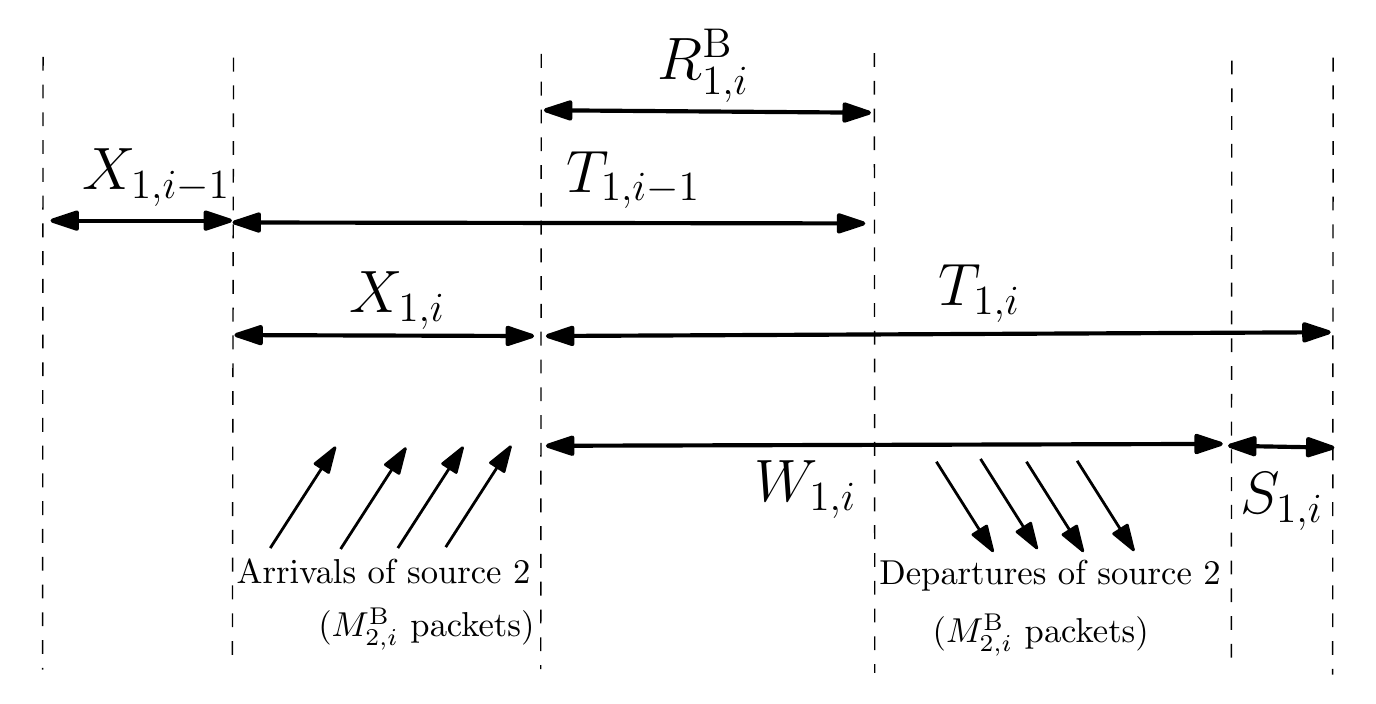}}\vspace{-3mm}\\
	\subfloat[]{\includegraphics[width=0.53\linewidth,trim = 0mm 0mm 0mm 0mm,clip]{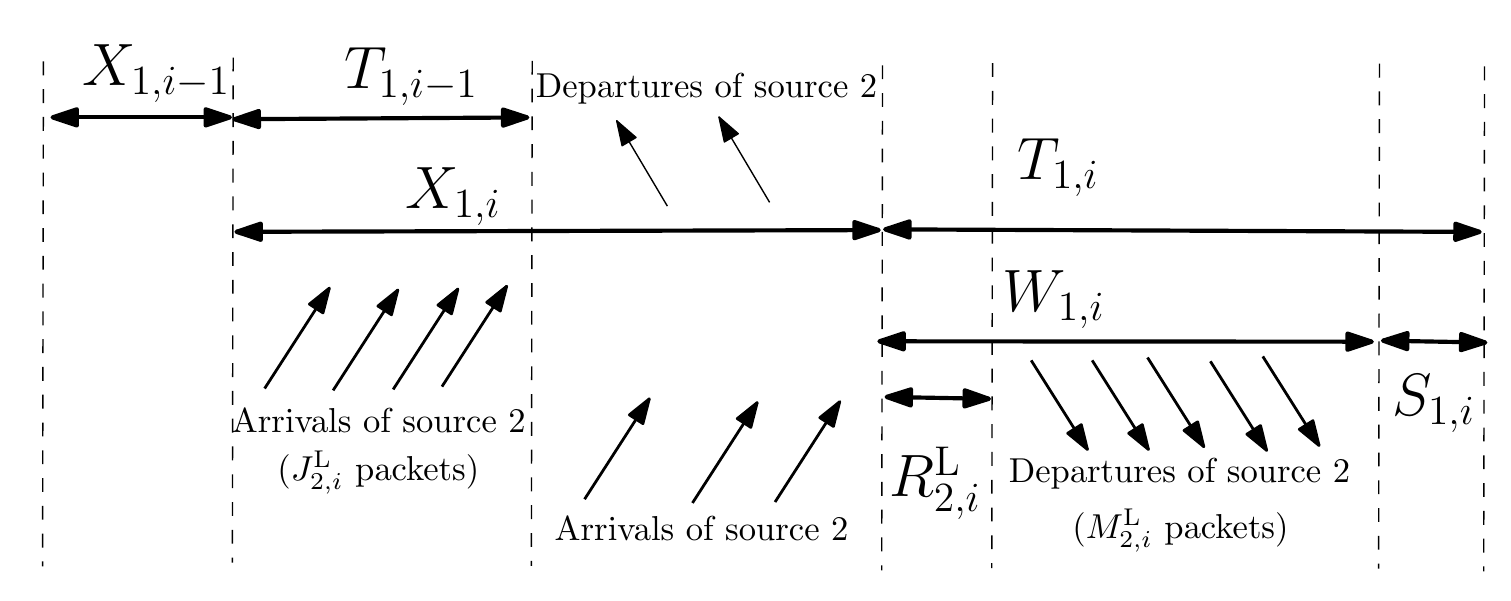}}\vspace{-3mm}
	\caption{ Illustration of the key quantities in characterizing the waiting time in \eqref{bnvb01} under (a) brief event $E^{\mathrm{B}}_{1,i}$  and  (b) long event $E^{\mathrm{L}}_{1,i}$. }\vspace{-10mm}  			
\label{AoI021}
\end{figure}


Based on \eqref{bnvb01},  $\mathbb{E}[X_{1,i} W_{1,i}]$ in \eqref{mbn0} can be expressed as
\begin{align}\label{mnbv0}
\mathbb{E}[X_{1,i}W_{1,i}]=& \bigg(\mathbb{E}[R_{1,i}^{\mathrm{B}}X_{1,i}|E^{\mathrm{B}}_{1,i}]+\mathbb{E}[S_{1,i}^{\mathrm{B}}X_{1,i}|E^{\mathrm{B}}_{1,i}]\bigg)P(E^{\mathrm{B}}_{1,i})+\\&\nonumber\mathbb{E}[(S_{1,i}^{\mathrm{L}}+R^{\mathrm{L}}_{2,i})X_{1,i}|E^{\mathrm{L}}_{1,i}]P(E^{\mathrm{L}}_{1,i}),
\end{align}
where $ P(E^{\mathrm{B}}_{1,i}) $ and $ P(E^{\mathrm{L}}_{1,i}) $ denote the probabilities of the events $ E^{\mathrm{B}}_{1,i}$ and $E^{\mathrm{L}}_{1,i}$, respectively.

Next, we derive the expressions for $P(E^{\mathrm{B}}_{1,i})$ and $P(E^{\mathrm{L}}_{1,i})$ in \eqref{mnbv0}. Then, by referring to $\mathbb{E}[R_{1,i}^{\mathrm{B}}X_{1,i}|E^{\mathrm{B}}_{1,i}]$, $\mathbb{E}[S_{1,i}^{\mathrm{B}}X_{1,i}|E^{\mathrm{B}}_{1,i}]$, and  $\mathbb{E}[(S_{1,i}^{\mathrm{L}}+R^{\mathrm{L}}_{2,i})X_{1,i}|E^{\mathrm{L}}_{1,i}]$ in \eqref{mnbv0} as the first, the second,  and the third conditional expectation terms of \eqref{mnbv0}, we present elaborate derivations of the first and second terms in Sections \ref{Label_for_subsubsection_1} and  \ref{Label_for_subsubsection_2}, respectively, and in Section \ref{Label_for_subsubsection_3} we point out the difficulties involved in computing the third  term for a generic service time distribution.

The following lemma gives the expressions for $P(E^{\mathrm{B}}_{1,i})$ and $P(E^{\mathrm{L}}_{1,i})$ in \eqref{mnbv0}.

\begin{lemm}\label{lemmii1}
The probabilities of the events $E_{1,i}^{\mathrm{B}}$ and $E_{1,i}^{\mathrm{L}}$ in \eqref{0011nm} are calculated as follows:
\begin{align}\label{proffe01}
P(E^{\mathrm{B}}_{1,i})=	\dfrac{L_{S}(\lambda_1)(\lambda+(\rho-1)\lambda_1)-\lambda_{2}}{\lambda L_{S}(\lambda_1)-\lambda_{2}},
\end{align}
\begin{align}\label{proffe02}
P(E^{\mathrm{L}}_{1,i})=\dfrac{(1-\rho)\lambda_1L_{S}(\lambda_1)}{\lambda L_{S}(\lambda_1)-\lambda_{2}},
\end{align}
{where $L_{S}(\lambda_1)$ is the Laplace transform of the PDF of the   service time $S$  at $\lambda_1$}; note that the service times of all packets are stochastically identical as $S_{1,i}=^{\mathrm{st}}S_{2,i}=^{\mathrm{st}}S$, $\forall{i}$.
\end{lemm}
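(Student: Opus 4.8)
\textbf{Proof plan for Lemma \ref{lemmii1}.}
The plan is to compute $P(E^{\mathrm{B}}_{1,i}) = P(T_{1,i-1} \ge X_{1,i})$ and its complement by conditioning on the system time of the previous source-1 packet, $T_{1,i-1}$, and using the fact that $X_{1,i} \sim \text{Exp}(\lambda_1)$ is independent of $T_{1,i-1}$. If we write $L_T(\cdot)$ for the Laplace transform of the PDF of the system time $T$, then since $X_{1,i}$ is exponential, $P(X_{1,i} \le t \mid T_{1,i-1}=t) $ integrated against the law of $T_{1,i-1}$ gives $P(E^{\mathrm{B}}_{1,i}) = 1 - \mathbb{E}[e^{-\lambda_1 T_{1,i-1}}] = 1 - L_T(\lambda_1)$, and correspondingly $P(E^{\mathrm{L}}_{1,i}) = L_T(\lambda_1)$. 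So the first step is this elementary reduction, which converts both probabilities into a single quantity $L_T(\lambda_1)$.

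The second step is to express $L_T(\lambda_1)$ in terms of the service-time Laplace transform $L_S(\lambda_1)$. For an M/G/1 FCFS queue the stationary system time $T$ decomposes as waiting time plus service time, and its Laplace--Stieltjes transform is given by the Pollaczek--Khinchine formula: $L_T(s) = L_S(s)\,\dfrac{(1-\rho)s}{s - \lambda + \lambda L_S(s)}$, where $\lambda = \lambda_1+\lambda_2$ is the total arrival rate and $\rho = \lambda/\mu$. Here I would invoke the standard result (the same $L_T$ that the paper later records in \eqref{000bn0v001b0012}) rather than re-deriving it. Substituting $s = \lambda_1$ gives $L_T(\lambda_1) = \dfrac{(1-\rho)\lambda_1 L_S(\lambda_1)}{\lambda_1 - \lambda + \lambda L_S(\lambda_1)}$. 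A small algebraic rearrangement of the denominator, writing $\lambda_1 - \lambda = -\lambda_2$ and then matching against the stated form, yields exactly $P(E^{\mathrm{L}}_{1,i}) = \dfrac{(1-\rho)\lambda_1 L_S(\lambda_1)}{\lambda L_S(\lambda_1) - \lambda_2}$ as in \eqref{proffe02}.

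The final step is purely bookkeeping: $P(E^{\mathrm{B}}_{1,i}) = 1 - P(E^{\mathrm{L}}_{1,i})$, so I put the expression over the common denominator $\lambda L_S(\lambda_1) - \lambda_2$, giving numerator $\lambda L_S(\lambda_1) - \lambda_2 - (1-\rho)\lambda_1 L_S(\lambda_1) = L_S(\lambda_1)\big(\lambda - (1-\rho)\lambda_1\big) - \lambda_2 = L_S(\lambda_1)\big(\lambda + (\rho-1)\lambda_1\big) - \lambda_2$, which matches \eqref{proffe01}. One subtlety worth stating explicitly is why $X_{1,i}$ is independent of $T_{1,i-1}$: the interarrival time $X_{1,i}$ is generated by the Poisson process of source 1 after the generation instant $t_{1,i-1}$, whereas $T_{1,i-1}$ depends only on the state of the system up to and including that instant, so the memoryless/independent-increments property of the Poisson arrivals gives the independence.

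\textbf{Anticipated main obstacle.}
The only genuine content is justifying that the \emph{stationary} system-time distribution is the right law to use for $T_{1,i-1}$ when computing $P(E^{\mathrm{B}}_{1,i})$ — i.e., that in steady state the aggregate arrival stream seen by the queue is Poisson($\lambda$) and the FCFS M/G/1 Pollaczek--Khinchine transform applies verbatim to each packet regardless of which source it came from (since service times are i.i.d.\ across all packets, $S_{1,i} =^{\mathrm{st}} S_{2,i} =^{\mathrm{st}} S$). Once that is granted, the derivation is a one-line conditioning argument followed by substitution into the P--K formula and elementary algebra; nothing deeper is required.
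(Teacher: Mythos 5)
Your proposal is correct and follows essentially the same route as the paper's proof in Appendix~A: condition on $T_{1,i-1}$, use the independence and exponentiality of $X_{1,i}$ to obtain $P(E^{\mathrm{B}}_{1,i})=1-L_{T}(\lambda_1)$, and then substitute the standard M/G/1 Pollaczek--Khinchine transform $L_{T}(a)=\dfrac{(1-\rho)aL_{S}(a)}{a-\lambda(1-L_{S}(a))}$ at $a=\lambda_1$ with $\lambda_1-\lambda=-\lambda_2$ to reach \eqref{proffe01} and \eqref{proffe02}. The algebra checks out and the independence/stationarity points you flag are exactly the ones the paper also invokes.
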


\begin{proof}
See Appendix \ref{Lemapp}.
\end{proof}

\subsubsection{The First Conditional Expectation in \eqref{mnbv0}}\label{Label_for_subsubsection_1}
Let us now focus on the first conditional expectation term $\mathbb{E}[R_{1,i}^{\mathrm{B}}X_{1,i}|E^{\mathrm{B}}_{1,i}]$ in \eqref{mnbv0}. According to \eqref{mn0120},  this term is expressed as follows:
\begin{align}\label{vbfgh}
\mathbb{E}[R_{1,i}^{\mathrm{B}}X_{1,i}|E^{\mathrm{B}}_{1,i}]&= \mathbb{E}[T_{1,i-1}X_{1,i}|E^{\mathrm{B}}_{1,i}]-\mathbb{E}[X^2_{1,i}|E^{\mathrm{B}}_{1,i}]\\&\nonumber
=\int_{0}^{\infty}\int_{0}^{\infty} xt f_{X_{1,i},T_{1,i-1}|E^{\mathrm{B}}_{1,i}}(x,t)\mathrm{d}x \mathrm{d}t- \int_{0}^{\infty} x^2 f_{X_{1,i}|E^{\mathrm{B}}_{1,i}}(x) \mathrm{d}x,
\end{align}
where $f_{X_{1,i}|E^{\mathrm{B}}_{1,i}}(x)$ is the conditional PDF of the interarrival time $X_{1,i}$ given the event $E^{\mathrm{B}}_{1,i}$  and  $f_{X_{1,i},T_{1,i-1}|E^{\mathrm{B}}_{1,i}}(x,t)$ is the conditional joint PDF of the interarrival time $X_{1,i}$ and system time $T_{1,i-1}$ given the event $E^{\mathrm{B}}_{1,i}$. They are given by the following  lemma and corollary.

{\begin{lemm}\label{gh0fd0}
The conditional PDF $f_{X_{1,i},T_{1,i-1}|E^{\mathrm{B}}_{1,i}}(x,t)$ is given by
\begin{align}\label{nvhg012}
f_{X_{1,i},T_{1,i-1}|E^{\mathrm{B}}_{1,i}}(x,t) =
\begin{cases}0&x>t\\\dfrac{\lambda_1e^{-\lambda_1 x}f_{T_{1,i-1}}(t) }{P(E^{\mathrm{B}}_{1,i})}&x\le t.\end{cases}
\end{align}	
\end{lemm}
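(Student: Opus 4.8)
The plan is to reduce \eqref{nvhg012} to two elementary facts about the model: (i) the interarrival time $X_{1,i}$ is $\mathrm{Exp}(\lambda_1)$, and (ii) $X_{1,i}$ is statistically independent of the system time $T_{1,i-1}$ of the previous source-$1$ packet. Granting these, the unconditional joint density factorizes as $f_{X_{1,i},T_{1,i-1}}(x,t)=\lambda_1 e^{-\lambda_1 x} f_{T_{1,i-1}}(t)$ for $x,t\ge 0$, and since the event $E^{\mathrm B}_{1,i}=\{T_{1,i-1}\ge X_{1,i}\}$ is exactly the region $\{x\le t\}$, the definition of a density conditioned on an event gives $f_{X_{1,i},T_{1,i-1}\mid E^{\mathrm B}_{1,i}}(x,t)=f_{X_{1,i},T_{1,i-1}}(x,t)\,\mathbf{1}[x\le t]/P(E^{\mathrm B}_{1,i})$, which is precisely \eqref{nvhg012} (and $0$ for $x>t$). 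So the whole lemma rests on fact (ii).

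Fact (i) is immediate from the assumption that source $1$ generates packets according to a Poisson process of rate $\lambda_1$. For fact (ii) I would decompose $T_{1,i-1}=W_{1,i-1}+S_{1,i-1}$ and show that each summand is independent of $X_{1,i}$. The waiting time $W_{1,i-1}$ is a functional of the system state immediately before the arrival epoch $t_{1,i-1}$ (number in queue, residual service of the packet in service, the service requirements of the packets present) -- i.e., of the history up to $t_{1,i-1}$ -- whereas, by the memoryless (independent-increments) property of the source-$1$ Poisson stream, the residual time $X_{1,i}$ to the next source-$1$ arrival is $\mathrm{Exp}(\lambda_1)$ and independent of that history. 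The service time $S_{1,i-1}$ is, by assumption, drawn independently of the arrival processes, hence independent of $X_{1,i}$ (and of $W_{1,i-1}$). Combining, $X_{1,i}$ is independent of the pair $(W_{1,i-1},S_{1,i-1})$ and therefore of $T_{1,i-1}$.

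As a consistency check, integrating the numerator of \eqref{nvhg012} over $\{0\le x\le t\}$ yields $\int_{0}^{\infty} f_{T_{1,i-1}}(t)\,(1-e^{-\lambda_1 t})\,\mathrm dt=1-L_{T}(\lambda_1)$, which -- since under FCFS every packet has the generic system-time distribution -- must equal $P(E^{\mathrm B}_{1,i})$; substituting the Pollaczek--Khinchine transform $L_{T}(s)=L_{S}(s)\,s(1-\rho)/\!\left(s-\lambda(1-L_{S}(s))\right)$ indeed reproduces \eqref{proffe01} of Lemma~\ref{lemmii1}, confirming the bookkeeping. The one genuinely delicate point is the independence claimed in fact (ii): although $T_{1,i-1}$ concerns the \emph{previous} packet, its service phase extends past $t_{1,i-1}$, so the argument really does need the modeling assumption that service times are independent of the arrival streams, together with the memorylessness of the source-$1$ arrivals; and in a transient analysis one must read $f_{T_{1,i-1}}$ as the time-$t_{1,i-1}$ distribution rather than the stationary one. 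Everything else is routine.
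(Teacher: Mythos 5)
Your proof is correct and follows essentially the same route as the paper's: apply the formula $f_{Y_1,Y_2\mid A}=f_{Y_1,Y_2}\mathbf{1}_A/P(A)$ with $A=E^{\mathrm{B}}_{1,i}=\{x\le t\}$, and factorize the joint density using the independence of $X_{1,i}$ and $T_{1,i-1}$ together with $X_{1,i}\sim\mathrm{Exp}(\lambda_1)$. The only difference is that you explicitly justify the independence of $X_{1,i}$ and $T_{1,i-1}$ (via memorylessness of the source-1 Poisson stream and the independence of service times), which the paper simply asserts; that is a welcome addition, not a divergence.
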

\begin{proof}	
See Appendix \ref{Lemapp}.
\end{proof}}
{The conditional PDF $f_{X_{1,i}|E^{\mathrm{B}}_{1,i}}(x)$ is determined by the following corollary, which is an immediate consequence of Lemma 2.
	\begin{coro} \label{ghfd0}
   The conditional PDF $f_{X_{1,i}|E^{\mathrm{B}}_{1,i}}(x)$ is given by
   \begin{align}\label{hgjr0}
   f_{X_{1,i}|E^{\mathrm{B}}_{1,i}}(x)=\dfrac{\lambda_1 e^{-\lambda_1 x}(1-F_{T_{1,i-1}}(x))}{P(E^{\mathrm{B}}_{1,i})},
   \end{align}
   where $F_{T_{1,i-1}}(x)$ is the cumulative distribution function  of  $T_{1,i-1}$.
\end{coro}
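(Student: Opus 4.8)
\textbf{Proof proposal for Corollary \ref{ghfd0}.}
The plan is simply to obtain the conditional PDF of the interarrival time $X_{1,i}$ given $E^{\mathrm{B}}_{1,i}$ by marginalizing the conditional joint PDF $f_{X_{1,i},T_{1,i-1}|E^{\mathrm{B}}_{1,i}}(x,t)$ from Lemma \ref{gh0fd0} over the variable $t$. Concretely, I would write
\begin{equation}\nonumber
f_{X_{1,i}|E^{\mathrm{B}}_{1,i}}(x)=\int_{0}^{\infty} f_{X_{1,i},T_{1,i-1}|E^{\mathrm{B}}_{1,i}}(x,t)\,\mathrm{d}t,
\end{equation}
and then substitute the piecewise expression \eqref{nvhg012}: the integrand vanishes for $t<x$, so the integral reduces to $\int_{x}^{\infty}\frac{\lambda_1 e^{-\lambda_1 x} f_{T_{1,i-1}}(t)}{P(E^{\mathrm{B}}_{1,i})}\,\mathrm{d}t$.

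Next, I would pull the factor $\lambda_1 e^{-\lambda_1 x}/P(E^{\mathrm{B}}_{1,i})$, which does not depend on $t$, out of the integral, leaving $\int_{x}^{\infty} f_{T_{1,i-1}}(t)\,\mathrm{d}t$. Recognizing this tail integral as $1-F_{T_{1,i-1}}(x)$ by definition of the cumulative distribution function of $T_{1,i-1}$ immediately yields the claimed identity \eqref{hgjr0}. As a consistency check one may verify that $\int_{0}^{\infty} f_{X_{1,i}|E^{\mathrm{B}}_{1,i}}(x)\,\mathrm{d}x=1$, which amounts to recovering $P(E^{\mathrm{B}}_{1,i})=\int_{0}^{\infty}\lambda_1 e^{-\lambda_1 x}(1-F_{T_{1,i-1}}(x))\,\mathrm{d}x$, i.e. the probability computed in Lemma \ref{lemmii1} via $P(E^{\mathrm{B}}_{1,i})=P(T_{1,i-1}\ge X_{1,i})$.

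There is essentially no obstacle here: the corollary is a one-line marginalization of Lemma \ref{gh0fd0}, and the only point requiring a small amount of care is the change in the limits of integration induced by the support condition $x\le t$ in \eqref{nvhg012}. All the substantive work—deriving the joint PDF and the event probability—has already been carried out in Lemmas \ref{lemmii1} and \ref{gh0fd0}.
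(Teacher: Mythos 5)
Your proof is correct and matches the paper's intent exactly: the paper states the corollary is "an immediate consequence of Lemma \ref{gh0fd0}," and the intended argument is precisely the marginalization over $t$ that you carry out, with the support condition $x\le t$ turning the integral into the tail $1-F_{T_{1,i-1}}(x)$. No gaps.
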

}

Now, having introduced the conditional PDFs   in  Lemma  \ref{gh0fd0} and Corollary \ref{ghfd0}, we can compute the conditional expectation $\mathbb{E}[R_{1,i}^{\mathrm{B}}X_{1,i}|E^{\mathrm{B}}_{1,i}]$ in  \eqref{vbfgh}.
Using Lemma \ref{gh0fd0}, the first  term  in \eqref{vbfgh} is calculated as
\begin{align}\label{102mn}
\mathbb{E}[T_{1,i-1}X_{1,i}|E^{\mathrm{B}}_{1,i}] &=\int_{0}^{\infty}\int_{0}^{\infty} xt f_{X_{1,i},T_{1,i-1}|E^{\mathrm{B}}_{1,i}}(x,t)\mathrm{d}x \mathrm{d}t\\&\nonumber
= \dfrac{1}{P(E^{\mathrm{B}}_{1,i})}\int_{0}^{\infty}\int_{0}^{t} tx \lambda_1 e^{-\lambda_1x}f_{T_{1,i-1}}(t)\mathrm{d}x \mathrm{d}t\\&\nonumber
=\dfrac{1}{P(E^{\mathrm{B}}_{1,i})}\int_{0}^{\infty} \bigg(-t^2e^{-\lambda_1 t}-\dfrac{t}{\lambda_1}e^{-\lambda_1 t}+\dfrac{t}{\lambda_1}\bigg)f_{T_{1,i-1}}(t)\mathrm{d}t\\&\hspace{5mm}\nonumber
\dfrac{1}{P(E^{\mathrm{B}}_{1,i})}\bigg(-\mathbb{E}[T^2e^{-\lambda_1 T}]-\dfrac{\mathbb{E}[Te^{-\lambda_1 T}]}{\lambda_1}+\dfrac{\mathbb{E}[T]}{\lambda_1}\bigg)\\&\nonumber
\stackrel{(a)}{=}\dfrac{1}{P(E^{\mathrm{B}}_{1,i})}\bigg(-L''_{T}(\lambda_1)+\dfrac{L'_{T}(\lambda_1)}{\lambda_1}+\dfrac{\mathbb{E}[W]+1/\mu}{\lambda_1} \bigg),
\end{align}
where in equality (a) the first and second derivative of the Laplace transform of the PDF of the system time, $L'_{T}$ and $L''_{T}$ at $\lambda_1$, respectively, were obtained using the feature of the Laplace transform that for any function $f(y), y\ge 0$, we have \cite[Sect.~13.5]{rade2013mathematics}
\begin{align}\label{dr001}
L_{y^nf(y)}(a)=(-1)^n\dfrac{\mathrm{d}^n(L_{f(y)}(a))}{{\mathrm{d}a^n}},
\end{align}
and consequently,
\begin{align}\label{dr01}
\mathbb{E}[T^ne^{-a T}]&=(-1)^n\dfrac{\mathrm{d}^n(L_{T}(a))}{{\mathrm{d}a^n}}.
\end{align}

Using Corollary \ref{ghfd0}, the second term $\mathbb{E}[X^2_{1,i}|E^{\mathrm{B}}_{1,i}]$ in \eqref{vbfgh} is calculated as
\begin{align}\label{mn01}
\mathbb{E}[X^2_{1,i}|E^{\mathrm{B}}_{1,i}]&=\int_{0}^{\infty} x^2 f_{X_{1,i}|E^{\mathrm{B}}_{1,i}}(x) \mathrm{d}x = \dfrac{1}{P(E^{\mathrm{B}}_{1,i})}\int_{0}^{\infty} x^2 \lambda_1 e^{-\lambda_1 x}\big(1-F_{T_{1,i-1}}(x)\big)\mathrm{d}x\\&\nonumber=
\dfrac{1}{P(E^{\mathrm{B}}_{1,i})}\bigg(\int_{0}^{\infty} x^2 \lambda_1 e^{-\lambda_1 x}\mathrm{d}x-\lambda_1\int_{0}^{\infty} e^{-\lambda_1 x}\big(x^2F_{T_{1,i-1}}(x)\big)\mathrm{d}x\bigg)\\&\nonumber= \dfrac{1}{P(E^{\mathrm{B}}_{1,i})}\bigg(\dfrac{2}{\lambda^2_1}-\lambda_1 L_{x^2F_{T_1}(x)}(\lambda_1)\bigg).
\end{align}
The Laplace transform $L_{x^2F_{T_1}(x)}(\lambda_1)$ in \eqref{mn01} is given by the following lemma.
\begin{lemm}\label{nvbfh}
$L_{x^2F_{T_1}(x)}(\lambda_1)$ is given as follows:
\begin{align}
L_{x^2F_{T_1}(x)}(a)\bigg|_{a=\lambda_1}=\dfrac{\lambda_1 L''_{T}(\lambda_1)-2 L'_{T}(\lambda_1)}{\lambda_1^2}+\dfrac{2L_{T}(\lambda_1)}{\lambda_1^3}.
\end{align}
\end{lemm}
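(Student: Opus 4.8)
\textbf{Proof plan for Lemma \ref{nvbfh}.} The plan is to reduce $L_{x^2F_{T_1}(x)}(\lambda_1)$ entirely to the quantities $L_T(\lambda_1)$, $L'_T(\lambda_1)$, $L''_T(\lambda_1)$ by elementary manipulations of the Laplace transform, and in particular without touching the (unknown) law of $T$ beyond its PDF $f_{T_1}$. Two essentially equivalent routes are available and I would present the shorter one.

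\emph{Route 1 (differentiate a transform).} Set $h(x)=F_{T_1}(x)=\int_0^x f_{T_1}(t)\,\mathrm{d}t$. By the standard integration rule for Laplace transforms one has $L_h(a)=\tfrac{1}{a}L_{f_{T_1}}(a)=\tfrac{1}{a}L_T(a)$, since $L_{f_{T_1}}=L_T$ by definition of $L_T$. Applying \eqref{dr001} with $n=2$ to $h$ then gives $L_{x^2F_{T_1}(x)}(a)=\dfrac{\mathrm{d}^2}{\mathrm{d}a^2}\!\left(\dfrac{L_T(a)}{a}\right)$. Differentiating $L_T(a)/a$ twice by the quotient rule produces $\dfrac{L''_T(a)}{a}-\dfrac{2L'_T(a)}{a^2}+\dfrac{2L_T(a)}{a^3}$; evaluating at $a=\lambda_1$ and combining the first two terms over the common denominator $\lambda_1^2$ yields precisely the asserted identity.

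\emph{Route 2 (direct double integral, useful as a check).} Write $L_{x^2F_{T_1}(x)}(\lambda_1)=\int_0^\infty x^2 e^{-\lambda_1 x}\Big(\int_0^x f_{T_1}(t)\,\mathrm{d}t\Big)\mathrm{d}x$ and interchange the order of integration, which is legitimate by Tonelli as the integrand is nonnegative, obtaining $\int_0^\infty f_{T_1}(t)\Big(\int_t^\infty x^2 e^{-\lambda_1 x}\,\mathrm{d}x\Big)\mathrm{d}t$. The inner integral equals $e^{-\lambda_1 t}\big(\tfrac{t^2}{\lambda_1}+\tfrac{2t}{\lambda_1^2}+\tfrac{2}{\lambda_1^3}\big)$, so the expression becomes $\tfrac{1}{\lambda_1}\mathbb{E}[T^2 e^{-\lambda_1 T}]+\tfrac{2}{\lambda_1^2}\mathbb{E}[T e^{-\lambda_1 T}]+\tfrac{2}{\lambda_1^3}\mathbb{E}[e^{-\lambda_1 T}]$. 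Invoking \eqref{dr01} to replace these by $L''_T(\lambda_1)$, $-L'_T(\lambda_1)$, and $L_T(\lambda_1)$ respectively gives the same formula, confirming Route 1.

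This lemma is a short computation rather than a deep argument, so there is no real obstacle; the only points deserving a line of justification are the interchange of integration (Tonelli, via nonnegativity) in Route 2 and the validity of applying \eqref{dr001}/\eqref{dr01} at $a=\lambda_1$, which holds because $T$ has finite second moment (the system is stable) and $L_T$ is smooth at the positive point $\lambda_1$. I would write Route 1 in the body of the paper for brevity.
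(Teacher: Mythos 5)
Your Route 1 is exactly the paper's proof: it combines the integration rule $L_{\int_0^y f(b)\,\mathrm{d}b}(a)=L_{f}(a)/a$ with the differentiation rule \eqref{dr001} for $n=2$ and then computes $\frac{\mathrm{d}^2}{\mathrm{d}a^2}\bigl(L_T(a)/a\bigr)$ at $a=\lambda_1$. The proposal is correct and takes essentially the same approach; Route 2 is just an additional (also valid) sanity check.
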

\begin{proof}
	See Appendix \ref{Lemapp}.
\end{proof}
Thus, applying Lemma \ref{nvbfh}, the conditional expectation  in \eqref{mn01} is given as
\begin{align}\label{1mjn}
\mathbb{E}[X^2_{1,i}|E^{\mathrm{B}}_{1,i}]= \dfrac{1}{P(E^{\mathrm{B}}_{1,i})}\bigg(\dfrac{2}{\lambda^2_1}-L''_{T}(\lambda_1)+\dfrac{2 L'_{T}(\lambda_1)}{\lambda_1}-\dfrac{2L_{T}(\lambda_1)}{\lambda_1^2}\bigg).
\end{align}

Finally, substituting \eqref{102mn} and \eqref{1mjn} in \eqref{vbfgh}, the first conditional expectation $\mathbb{E}[R_{1,i}^{\mathrm{B}}X_{1,i}|E^{\mathrm{B}}_{1,i}]$ in \eqref{mnbv0} is given by
\begin{align}\label{mnb50}
\mathbb{E}[R_{1,i}^{\mathrm{B}}X_{1,i}|E^{\mathrm{B}}_{1,i}]= \dfrac{1}{P(E^{\mathrm{B}}_{1,i})}\bigg(\dfrac{\mathbb{E}[W]+1/\mu}{\lambda_1}-\dfrac{L'_{T}(\lambda_1)}{\lambda_1}+\dfrac{2L_{T}(\lambda_1)}{\lambda_1^2}-\dfrac{2}{\lambda^2_1}\bigg).
\end{align}

\subsubsection{The Second Conditional Expectation in \eqref{mnbv0}}\label{Label_for_subsubsection_2}
Next, we derive the second term $\mathbb{E}[S_{1,i}^{\mathrm{B}}X_{1,i}|E^{\mathrm{B}}_{1,i}]$ in \eqref{mnbv0}. First, let us elaborate the quantity $M_{2,i}^{\mathrm{B}}$ which is an integral part of calculating \eqref{mnbv0}. Recall that $M_{2,i}^{\mathrm{B}}$ is defined as the number of queued packets of source $2$ that must be served before packet $1,i$ according to the FCFS policy under the event ${E_{1,i}^{\mathrm{B}}=\{T_{1,i-1}\ge{X_{1,i}}\}}$. Thus, $M_{2,i}^{\mathrm{B}}$ is equal to the number of arrived (and thus, queued) packets of source $2$ during the (brief) interarrival time $X_{1,i}$. Consequently, we have a Markov chain ${{T_{1,i-1}}\leftrightarrow{X_{1,i}}\leftrightarrow{M_{2,i}^{\mathrm{B}}}}$ conditioned on the event $E_{1,i}^{\mathrm{B}}$, i.e., $M_{2,i}^{\mathrm{B}}$ is independent of $T_{1,i-1}$ given $X_{1,i}$ under the event $E_{1,i}^{\mathrm{B}}$.

Accordingly, the conditional expectation $\mathbb{E}[S_{1,i}^{\mathrm{B}}X_{1,i}|E^{\mathrm{B}}_{1,i}]$ in \eqref{mnbv0} can be expressed as
\begin{align}\label{nmb00}
\mathbb{E}[S_{1,i}^{\mathrm{B}}X_{1,i}|E^{\mathrm{B}}_{1,i}]&=\int_{0}^{\infty}x  \mathbb{E}\bigg[\textstyle\sum_{i'\in \mathcal{M}_{2,i}^{\mathrm{B}}}S_{2,i'}|E^{\mathrm{B}}_{1,i},X_{1,i}=x\bigg] f_{X_{1,i}|E^{\mathrm{B}}_{1,i}}(x)\mathrm{d}x\\
&\nonumber\stackrel{(a)}{=}\dfrac{1}{\mu}\int_{0}^{\infty}x \mathbb{E}\bigg[M_{2,i}^{\mathrm{B}}|X_{1,i}=x\bigg]f_{X_{1,i}|E^{\mathrm{B}}_{1,i}}(x)\mathrm{d}x\\
&\nonumber\stackrel{(b)}{=}\dfrac{ \rho_{2}}{P(E^{\mathrm{B}}_{1,i})}\int_{0}^{\infty}x^2 \lambda_1 e^{-\lambda_1 x}(1-F_{T_{1,i-1}}(x))\mathrm{d}x\\&\nonumber=
\dfrac{ \rho_{2}}{P(E^{\mathrm{B}}_{1,i})}\bigg(\int_{0}^{\infty}x^2 \lambda_1 e^{-\lambda_1 x} \mathrm{d}x-\int_{0}^{\infty}x^2\lambda_1 e^{-\lambda_1 t} F_{T_{1,i-1}}(x)\mathrm{d}x\bigg)\\
&\nonumber\stackrel{(c)}{=}\dfrac{ \rho_{2}}{P(E^{\mathrm{B}}_{1,i})}\bigg(\dfrac{2}{\lambda_1^2}-L''_{T}(\lambda_1)+\dfrac{2 L'_{T}(\lambda_1)}{\lambda_1}-\dfrac{2L_{T}(\lambda_1)}{\lambda_1^2} \bigg),
\end{align}
where equality (a) follows because (i) the service time $S_{2,i'}$ is independent of all other random variables in the system and (ii) by the Markov chain property ${{T_{1,i-1}}\leftrightarrow{X_{1,i}}\leftrightarrow{M_{2,i}^{\mathrm{B}}}}$ conditioned on $E_{1,i}^{\mathrm{B}}$, $M_{2,i}^{\mathrm{B}}$ is independent of $T_{1,i-1}$ given ${X_{1,i}=x}$ under the event ${E_{1,i}^{\mathrm{B}}}$;  equality (b) comes from Corollary \ref{ghfd0} and the fact that ${\mathbb{E}[M_{2,i}^{\mathrm{B}}|X_{1,i}=x]=\lambda_{2}x}$;  equality (c) comes from Lemma \ref{nvbfh}.

\subsubsection{The Third Conditional Expectation in \eqref{mnbv0}}\label{Label_for_subsubsection_3}
The third term ${\mathbb{E}[(S_{1,i}^{\mathrm{L}}\!+\!R^{\mathrm{L}}_{2,i})X_{1,i}|E_{1,i}^{\mathrm{L}}]}$ in \eqref{mnbv0} can be calculated as
\begin{align}\nonumber
&\mathbb{E}[(S_{1,i}^{\mathrm{L}}\!+\!R^{\mathrm{L}}_{2,i})X_{1,i}|E_{1,i}^{\mathrm{L}}]\!\!=\!\!\displaystyle\int_{0}^{\infty}\!\!\int_{0}^{\infty}\!\!\!x\mathbb{E}\!\!\left[\textstyle\sum_{i'\in \mathcal{M}_{2,i}^{\mathrm{L}}}S_{2,i'}|X_{1,i}\!=\!x,T_{1,i-1}\!=\!t,E_{1,i}^{\mathrm{L}}\!\right]\!\!f_{X_{1,i},T_{1,i-1}|E_{1,i}^{\mathrm{L}}}(x,t)\mathrm{d}x\mathrm{d}t\\&\label{mg1main}+\displaystyle\int_{0}^{\infty}\int_{0}^{\infty}x\mathbb{E}\left[R^{\mathrm{L}}_{2,i}|X_{1,i}=x,T_{1,i-1}=t,E_{1,i}^{\mathrm{L}}\right]f_{X_{1,i},T_{1,i-1}|E_{1,i}^{\mathrm{L}}}(x,t)\mathrm{d}x\mathrm{d}t,
\end{align}
where the first term on the right hand side can be calculated as
\begin{align}\nonumber
&\displaystyle\int_{0}^{\infty}\!\!\int_{0}^{\infty}\!\!\!x\mathbb{E}\!\!\left[\textstyle\sum_{i'\in \mathcal{M}_{2,i}^{\mathrm{L}}}S_{2,i'}|X_{1,i}\!=\!x,T_{1,i-1}\!=\!t,E_{1,i}^{\mathrm{L}}\!\right]\!\!f_{X_{1,i},T_{1,i-1}|E_{1,i}^{\mathrm{L}}}(x,t)\mathrm{d}x\mathrm{d}t\\&\label{mg1main0}\overset{(a)}{=}\dfrac{1}{\mu}\displaystyle\int_{0}^{\infty}\int_{0}^{\infty}x\mathbb{E}\left[M^{\mathrm{L}}_{2,i}|X_{1,i}=x,T_{1,i-1}=t,E_{1,i}^{\mathrm{L}}\right]f_{X_{1,i},T_{1,i-1}|E_{1,i}^{\mathrm{L}}}(x,t)\mathrm{d}x\mathrm{d}t\\
&\nonumber=\dfrac{1}{\mu}\displaystyle\int_{0}^{\infty}\int_{0}^{\infty}x\textstyle\sum_{m=0}^{\infty}m\mathrm{Pr}[M_{2,i}^{\mathrm{L}}=m|X_{1,i}=x,T_{1,i-1}=t,E_{1,i}^{\mathrm{L}}]f_{X_{1,i},T_{1,i-1}|E_{1,i}^{\mathrm{L}}}(x,t)\mathrm{d}x\mathrm{d}t,
\end{align}
where equality (a) follows because (i) the service time $S_{2,i'}$ is independent of all other  random variables in the system and (ii) the expectation of a sum of  random number $N$ independent and identically distributed  random variables ${Y_n, n=1, \dots, N},$ is equal to the expectation of the random number $\mathbb{E}[N]$ times the expectation of a random variable $\mathbb{E}[Y_n]$, i.e., ${\mathbb{E}[\sum_{n=1}^{N}Y_n]=\mathbb{E}[N]E[Y_n]}$  \cite[Sect.~11.2]{ross2014introduction}.

\textbf{Remark 1.}
The  second term on the right hand side of \eqref{mg1main} and the final expression in \eqref{mg1main0} reveal two critical issues in deriving the third conditional expectation term of \eqref{mnbv0}. The second term on the right hand side of \eqref{mg1main} contains the possible residual service time of the  packet of source 2 that is under service at the arrival instant of packet ${1,i}$, $R^{\mathrm{L}}_{2,i}$, which cannot be further simplified. 
%
%
%
%
{In the final expression of \eqref{mg1main0}, we need to calculate the time-dependent probability of the number of packets in an M/G/1 queue with source 2 packet arrivals, i.e., $\mathrm{Pr}[M_{2,i}^{\mathrm{L}}=m|X_{1,i}=x,T_{1,i-1}=t,E_{1,i}^{\mathrm{L}}]$. Computing this time-dependent probability in an M/G/1 queueing model is complicated and needs the transient analysis of an	M/G/1 queueing model. While characterizations of the transient behavior of an M/G/1 queue are investigated in some works such as \cite{Takacs1995}, to the best of our knowledge,  such a time-dependent probability has not been derived before in closed form so that it could be used in deriving the conditional expectation in \eqref{mg1main0}.}

%
%
%
%
%
%
%
%
%
%
%
%
%
 Fortunately, these difficulties can be overcome when the service time is exponential, i.e., in an M/M/1 queueing model. Thus, we proceed as follows. We derive an \emph{exact} expression of the average AoI in a multi-source M/M/1 queueing model in Section \ref{Exact Expression for the Average AoI in a Multi-Source M/M/1 Queueing Model}. In Section \ref{An Approximation for the Average AoI in a Multi-Source M/G/1 Queueing Model}, we propose three approximations for \eqref{mg1main} and derive three \emph{approximate} expressions for the average AoI in a multi-source M/G/1 queueing model.

%

\section{Exact Expression for the Average AoI in a Multi-Source M/M/1 Queueing Model}\label{Exact Expression for the Average AoI in a Multi-Source M/M/1 Queueing Model}
In this section, we derive the \emph{exact} expression of the average AoI in \eqref{oointr00103} for a multi-source M/M/1 queueing model that was stated in Theorem \ref{exactmm1theorem} in Section II.B. Recall that in Section \ref{AoI_Multi}, we already derived general expressions (for an M/G/1 case) for the key terms needed to describe the average AoI, i.e., the three conditional expectation terms of \eqref{mnbv0}, which are given in \eqref{mnb50}, \eqref{nmb00}, and \eqref{mg1main}, respectively. Next, we specify these three terms to the case with exponentially distributed service time. We start by deriving an exact expression for the most challenging term, i.e., the third term \eqref{mg1main}, followed by the calculation of \eqref{mnb50} and \eqref{nmb00}.

	 Focus now on \eqref{mg1main}. Due to the memoryless property of the exponentially distributed service time,  the possible residual service time of the packet of source $2$ that is under service at the arrival instant of packet ${1,i}$ for event $E_{1,i}^{\mathrm{L}}$ is also exponentially distributed; thus, the waiting time is the sum of $ \hat{M}_{2,i}^{\mathrm{L}} $ exponentially distributed random variables, where $ \hat{M}_{2,i}^{\mathrm{L}} $ is the total number of source 2 packets in the \emph{system} (either in the queue or under service) at the arrival instant of packet $1,i$ conditioned
	 on the event $E_{1,i}^{\mathrm{L}}$  \cite[p.~168]{Kleinrock121}. Therefore, the waiting time in \eqref{mg1main} can be expressed as
\begin{equation}\label{WL_MM1}
W_{1,i}=S_{1,i}^{\mathrm{L}}+R^{\mathrm{L}}_{2,i}=\textstyle\sum_{i'\in \mathcal{\hat{M}}_{2,i}^{\mathrm{L}}}S_{2,i'},
\end{equation}
where $\mathcal{\hat{M}}_{2,i}^{\mathrm{L}}$ is the set of indices of packets of source $2$ that are in the system  at the arrival instant of packet $1,i$ for event $E_{1,i}^{\mathrm{L}}$, with $|\mathcal{\hat{M}}_{2,i}^{\mathrm{L}}|=\hat{M}_{2,i}^{\mathrm{L}}$. 

By \eqref{WL_MM1}, $\mathbb{E}[{W}_{1,i}X_{1,i}|E_{1,i}^{\mathrm{L}}] $ (cf. \eqref{mg1main})
	can be calculated as
\begin{align}\label{bnbm}
&\mathbb{E}[{W}_{1,i}X_{1,i}|E_{1,i}^{\mathrm{L}}]=\displaystyle\int_{0}^{\infty}\int_{0}^{\infty}x\mathbb{E}\left[\textstyle\sum_{i'\in \mathcal{\hat{M}}_{2,i}^{\mathrm{L}}}S_{2,i'}|X_{1,i}=x,T_{1,i-1}=t,E_{1,i}^{\mathrm{L}}\right]f_{X_{1,i}T_{1,i-1}|E_{1,i}^{\mathrm{L}}}(x,t)\mathrm{d}x\mathrm{d}t\nonumber\\
&{=}\dfrac{1}{\mu}\displaystyle\int_{0}^{\infty}\int_{0}^{\infty}x\mathbb{E}\left[\hat{M}_{2,i}^{\mathrm{L}}|X_{1,i}=x,T_{1,i-1}=t,E_{1,i}^{\mathrm{L}}\right]f_{X_{1,i}T_{1,i-1}|E_{1,i}^{\mathrm{L}}}(x,t)\mathrm{d}x\mathrm{d}t\\\nonumber
&{=}\dfrac{1}{\mu}\displaystyle\int_{0}^{\infty}\int_{0}^{\infty}x\textstyle\sum_{m=0}^{\infty}m\mathrm{Pr}[\hat{M}_{2,i}^{\mathrm{L}}=m|X_{1,i}=x,T_{1,i-1}=t,E_{1,i}^{\mathrm{L}}]f_{X_{1,i}T_{1,i-1}|E_{1,i}^{\mathrm{L}}}(x,t)\mathrm{d}x\mathrm{d}t.
\end{align}

Next, we calculate $\mathrm{Pr}[\hat{M}_{2,i}^{\mathrm{L}}=m|X_{1,i}=x,T_{1,i-1}=t,E_{1,i}^{\mathrm{L}}]$ in \eqref{bnbm} by introducing an \emph{auxiliary} random variable $J_{2,i}^{\mathrm{L}}$ that represents the number of source 2 packets in the system at the departure instant of packet $1,i-1$ for event $E_{1,i}^{\mathrm{L}}$ (see Fig. \ref{AoI021}(b)).
	Using the law of total expectation, $\mathrm{Pr}[\hat{M}_{2,i}^{\mathrm{L}}=m|X_{1,i}=x,T_{1,i-1}=t,E_{1,i}^{\mathrm{L}}]$ in \eqref{bnbm} is written as
\begin{align}\label{mnvb01}
&\mathrm{Pr}[\hat{M}_{2,i}^{\mathrm{L}}=m|X_{1,i}=x,T_{1,i-1}=t,E_{1,i}^{\mathrm{L}}]\\
&\nonumber =\textstyle\sum_{j=0}^{\infty}\mathrm{Pr}[\hat{M}_{2,i}^{\mathrm{L}}=m|J_{2,i}^{\mathrm{L}}=j,X_{1,i}=x,T_{1,i-1}=t,E_{1,i}^{\mathrm{L}}]\mathrm{Pr}[J_{2,i}^{\mathrm{L}}=j|X_{1,i}=x,T_{1,i-1}=t,E_{1,i}^{\mathrm{L}}],
\end{align}
where
\begin{align}\label{PrJ}
\mathrm{Pr}[J_{2,i}^{\mathrm{L}}=j|X_{1,i}=x,T_{1,i-1}=t,E_{1,i}^{\mathrm{L}}]&\overset{(a)}{=}\mathrm{Pr}[J_{2,i}^{\mathrm{L}}=j|T_{1,i-1}=t,E_{1,i}^{\mathrm{L}}]\overset{(b)}{=}\displaystyle{e^{-\lambda_{2}t}\frac{{(\lambda_{2}t)}^{j}}{j!}},
\end{align}
where equality $(a)$ follows because $J_{2,i}^{\mathrm{L}}$ is conditionally  independent of $X_{1,i}$ given $T_{1,i-1}$ and $E_{1,i}^{\mathrm{L}}$; equality $(b)$ follows because (i) under the long event $E_{1,i}^{\mathrm{L}}$, all $J_{2,i}^{\mathrm{L}}$ source 2 packets that are in the system at the departure instant of packet $ 1,i-1 $ must have arrived during the system time $T_{1,i-1}$ (see Fig. \ref{AoI021}(b)), and (ii) the probability of having $ j $ Poisson arrivals of rate $\lambda_2$ during the time interval $ T_{1,i-1}=t $ is $e^{-\lambda_{2}t}\frac{{(\lambda_{2}t)}^{j}}{j!}$ \cite[Eq.~(2.119)]{Kleinrock121}.

%
%
%
%
Focus now on term $\mathrm{Pr}[\hat{M}_{2,i}^{\mathrm{L}}=m|J_{2,i}^{\mathrm{L}}=j,X_{1,i}=x,T_{1,i-1}=t,E_{1,i}^{\mathrm{L}}]$ in \eqref{mnvb01}. Note that during the time interval between the departure of packet $ 1,i-1 $ and the arrival of packet $ 1,i $ (i.e., $ (t'_{1,i-1}, t_{1,i}) $ in Fig. 2) the queue receives packets only from source 2 and, therefore the system behaves as a single-source M/M/1 queue. Thus, $\mathrm{Pr}[\hat{M}_{2,i}^{\mathrm{L}}=m|J_{2,i}^{\mathrm{L}}=j,X_{1,i}=x,T_{1,i-1}=t,E_{1,i}^{\mathrm{L}}]$ in \eqref{mnvb01} represents the probability that a single-source M/M/1 queueing system with arrival rate $ \lambda_2 $ and which initially holds  $ j $ packets (either in the queue or under service) ends up holding $ m $ packets after $ \tau=x-t $ seconds. We denote this probability compactly by $\bar{P}_{m|j}(\tau)$ and it is given by the transient analysis of an M/M/1 queueing system as \cite[Eq.~(6)]{TCoMstabelqueue01},\cite[Eq.~(2.163)]{Kleinrock121}
\begin{align}\label{Pkj}
&\bar{P}_{m|j}(\tau)=e^{-(\lambda_2+\mu)\tau}\left[\rho_2^{(m-1)/2}I_{m-1}(2\sqrt{\mu\lambda_2}\tau)+\rho_2^{(m-j-1)/2}I_{m+j+1}(2\sqrt{\mu\lambda_2}\tau)\right]\\&\nonumber+\rho_2^m(1-\rho_2)\big(1-Q_{m+j+2}(\sqrt{2\lambda_2\tau},\sqrt{2\mu\tau})\big).
\end{align}
where  $I_k(\cdot)$ represents the modified Bessel function of the first kind of order $k$, and $Q_k(a,b)$ is the generalized Q-function.

Substituting \eqref{mnvb01}, \eqref{PrJ}, and \eqref{Pkj} into \eqref{bnbm}, we have
\begin{align}
&\nonumber\mathbb{E}[{W}_{1,i}X_{1,i}|E_{1,i}^{\mathrm{L}}]{=}\dfrac{1}{\mu}\displaystyle\int_{0}^{\infty}\int_{0}^{\infty}x\textstyle\sum_{m=0}^{\infty}\textstyle\sum_{j=0}^{\infty}m\bar{P}_{m|j}(x-t)e^{-\lambda_{2}t}\displaystyle\frac{{(\lambda_{2}t)}^{j}}{j!}f_{X_{1,i}T_{1,i-1}|E_{1,i}^{\mathrm{L}}}(x,t)\mathrm{d}x\mathrm{d}t\\
&\label{01mnk}\overset{(a)}{=}\dfrac{\lambda_1(1-\rho)}{P(E_{1,i}^{\mathrm{L}})}\!\!\!\displaystyle\int_{0}^{\infty}\!\!\!\int_{0}^{\infty}\!\!\!\!(t\!+\!\tau){e^{-\mu(t+\rho_1\tau)}}\bigg(\!\textstyle\sum_{m=0}^{\infty}\textstyle\sum_{j=0}^{\infty}m\bar{P}_{m|j}(\tau)\displaystyle\frac{{(\lambda_{2}t)}^{j}}{j!}\bigg)\mathrm{d}\tau\mathrm{d}t\\
&\nonumber\triangleq\dfrac{\lambda_1(1-\rho)}{P(E_{1,i}^{\mathrm{L}})}\Psi(\mu,\rho_1,\lambda_2),
\end{align}
where $(a)$ follows from the substitution ${\tau=x-t}$ and Lemma \ref{gh0fd00} (below) which derives the conditional PDF $f_{X_{1,i},T_{1,i-1}|E^{\mathrm{L}}_{1,i}}(x,t)$. Note that the double integral in $\Psi(\mu,\rho_1,\lambda_2)$ needs to be in general numerically calculated.

\begin{lemm}\label{gh0fd00}
The conditional PDF  $f_{X_{1,i},T_{1,i-1}|E^{\mathrm{L}}_{1,i}}(x,t)$ is given by
\begin{align}\label{mbnjhf}
f_{X_{1,i},T_{1,i-1}|E^{\mathrm{L}}_{1,i}}(x,t) =\begin{cases}0&x<t\\\dfrac{\lambda_1e^{-\lambda_1 x}f_{T_{1,i-1}}(t) }{P(E^{\mathrm{L}}_{1,i})}&x\ge t.\end{cases}
\end{align}	
\end{lemm}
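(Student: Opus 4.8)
The plan is to prove Lemma~\ref{gh0fd00} by mirroring the argument behind its brief-event counterpart, Lemma~\ref{gh0fd0}: first show that, \emph{unconditionally}, the interarrival time $X_{1,i}$ and the system time $T_{1,i-1}$ are independent so that their joint density factorizes, and then restrict this factorized density to the region carved out by the long event $E^{\mathrm{L}}_{1,i}=\{T_{1,i-1}<X_{1,i}\}$ and renormalize by $P(E^{\mathrm{L}}_{1,i})$ from Lemma~\ref{lemmii1}.

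For the independence step I would reason as follows. Because source~1 generates packets according to a Poisson process of rate $\lambda_1$, the interarrival time $X_{1,i}=t_{1,i}-t_{1,i-1}$ is $\mathrm{Exp}(\lambda_1)$, and by the independent-increments (strong Markov) property of the Poisson process evaluated at the arrival epoch $t_{1,i-1}$, $X_{1,i}$ is independent of all events up to time $t_{1,i-1}$, of the source~2 arrival process, and of all service times. On the other hand, under FCFS the system time $T_{1,i-1}$ equals the unfinished work present in the system at the arrival instant $t_{1,i-1}$ plus the service time $S_{1,i-1}$: the unfinished work is a measurable function of the arrival history of both sources and of the service completions strictly before $t_{1,i-1}$, while the residual service of the packet in service and the service requirements of the queued packets entering $T_{1,i-1}$ can, by the memoryless property of the exponential service, be taken to be fresh $\mathrm{Exp}(\mu)$ variables independent of the post-$t_{1,i-1}$ source~1 arrivals. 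Thus $T_{1,i-1}$ is determined entirely by quantities independent of $X_{1,i}$, so $f_{X_{1,i},T_{1,i-1}}(x,t)=\lambda_1 e^{-\lambda_1 x}\,f_{T_{1,i-1}}(t)$.

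Conditioning on $E^{\mathrm{L}}_{1,i}$ then amounts to restricting the support of this joint density to $\{(x,t):x>t\}$ and dividing by the probability of that region, which is exactly $P(E^{\mathrm{L}}_{1,i})$ given in \eqref{proffe02}. This yields $f_{X_{1,i},T_{1,i-1}|E^{\mathrm{L}}_{1,i}}(x,t)=\lambda_1 e^{-\lambda_1 x}f_{T_{1,i-1}}(t)/P(E^{\mathrm{L}}_{1,i})$ for $x\ge t$ and $0$ for $x<t$, as claimed; integrating the result over $0\le t\le x<\infty$ returns $1$ by the very definition of $P(E^{\mathrm{L}}_{1,i})$, which serves as a consistency check.

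The hard part is the independence claim, and in particular making sure that $T_{1,i-1}$ does not implicitly depend on source~1 arrivals occurring after $t_{1,i-1}$; it does not, since under FCFS such arrivals are queued strictly behind packet $1,i-1$ and cannot affect its sojourn, and since the service times contributing to $T_{1,i-1}$ may be chosen fresh by memorylessness and are therefore independent of the source~1 process. Everything else is immediate, and the same reasoning restricted instead to $\{x\le t\}$ recovers Lemma~\ref{gh0fd0}.
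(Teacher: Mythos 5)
Your proposal is correct and takes essentially the same route as the paper: the paper's proof (by reference to Lemma~\ref{gh0fd0}) likewise applies the restrict-and-renormalize formula $f_{Y_1,Y_2|A}=f_{Y_1,Y_2}/P(A)$ on $A$ and $0$ elsewhere, with $Y_1=X_{1,i}$ and $Y_2=T_{1,i-1}$ taken to be independent, so your contribution is merely to spell out the independence the paper asserts. One small caveat: your appeal to the memoryless exponential service in that independence argument is unnecessary --- and would not apply when the lemma is reused for the M/G/1 approximations in Section~\ref{An Approximation for the Average AoI in a Multi-Source M/G/1 Queueing Model} --- since under FCFS $T_{1,i-1}$ is a function only of the pre-$t_{1,i-1}$ arrival history and of service times that are independent of the arrival processes, which already makes it independent of the post-$t_{1,i-1}$ source-1 increment $X_{1,i}$ for a general service distribution.
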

\begin{proof}The proof of Lemma \ref{gh0fd00} follows from the similar steps as used for Lemma \ref{gh0fd0}.
\end{proof}

By substituting the probabilities $P(E^{\mathrm{B}}_{1,i})$ and $P(E^{\mathrm{L}}_{1,i})$ given by Lemma \ref{lemmii1} and the three derived conditional expectation terms  \eqref{mnb50},  \eqref{nmb00}, and \eqref{01mnk} into \eqref{mnbv0}, $ \mathbb{E}[X_{1,i}W_{1,i}] $ can be expressed as
\begin{align}\label{mn01nm}
\mathbb{E}[X_{1,i}W_{1,i}]=& \dfrac{\mathbb{E}[W]}{\lambda_1}+\lambda_1(1-\rho)\Psi(\mu,\rho_1,\lambda_2)+\dfrac{2(\rho_2-1)}{\lambda^2_1}+\dfrac{1}{\lambda_1\mu}+\dfrac{2(1-\rho_2)}{\lambda^2_1}L_{T}(\lambda_1)\\
&\nonumber+\dfrac{2\rho_2-1}{\lambda_1}L'_{T}(\lambda_1)-\rho_2L''_{T}(\lambda_1). 
\end{align}
Finally, by substituting \eqref{mn01nm} and \eqref{mbn0} into \eqref{oointr00103}, the average AoI of source 1 for a multi-source M/M/1 queueing model is expressed as:
\begin{align}\label{mm1a}
\Delta_1=& \mathbb{E}[W]+\lambda_1^2(1-\rho)\Psi(\mu,\rho_1,\lambda_2)+\dfrac{2}{\mu}\left(\dfrac{\lambda_2}{\lambda_1}+1\right)-1/\lambda_1+\dfrac{2(1-\rho_2)}{\lambda_1}L_{T}(\lambda_1)\\
&\nonumber+(2\rho_2-1)L'_{T}(\lambda_1)-\lambda_1\rho_2L''_{T}(\lambda_1),
\end{align}
where the average waiting time of each packet in the system, $ \mathbb{E}[W]$,  is given as  \cite[Sect.~3]{bertsekas1992data}
\begin{align}\label{queue000}
\mathbb{E}[W]=\dfrac{\mathbb{E}[S^2]\lambda}{2(1-\rho)},
\end{align}
where ${\mathbb{E}[S^2]=2/\mu^2}$ is the second moment of the service time, $L_{T}(\lambda_1)$ is a function of the Laplace transform of the PDF of the service time given by \cite[Sect.~5.1.2]{daigle2005basic}
\begin{align}\label{000bn0v001b0012}
L_{T}(\lambda_1)=\dfrac{\big(1-\rho\big)\lambda_1L_{S}(\lambda_1)}{\lambda_1-\lambda\big(1-L_{S}(\lambda_1)\big)},
\end{align}
and $L'_{T}(\lambda_1)$ and $L''_{T}(\lambda_1)$ are the first and second derivative of $L_{T}(\cdot)$ at $\lambda_1$, respectively, as
\begin{align}\label{rgn01}
L'_{T}(\lambda_1)&=\dfrac{\mathrm{d}(L_{T}(a))}{\mathrm{d}a}\bigg|_{a=\lambda_1}\!\!\!={(1-\rho)}\dfrac{{\lambda} L^2_{S}(\lambda_1)+\big(\lambda^2_1-\lambda_1\lambda\big)L'_{S}(\lambda_1)-\lambda L_{S}(\lambda_1)}{\big(\lambda_1-\lambda\big(1-L_{S}(\lambda_1)\big)\big)^2},\\\nonumber
L''_{T}(\lambda_1)&=\dfrac{\mathrm{d}^2(L_{T}(a))}{\mathrm{d}a^2}\bigg|_{a=\lambda_1}\nonumber\!\!\!={(1-\rho)}\bigg(\dfrac{{\lambda} L''_{S}(\lambda_1)\big(\lambda^2_1-\lambda_1\lambda\big)+2L'_{S}(\lambda_1)\big(\lambda_1-\lambda+\lambda L_{S}(\lambda_1)\big)}{\big(\lambda_1-\lambda\big(1-L_{S}(\lambda_1)\big)\big)^2}\\&\nonumber
-\dfrac{2({\lambda} L^2_{S}(\lambda_1)+\big(\lambda^2_1-\lambda_1\lambda\big)L'_{S}(\lambda_1)-\lambda L_{S}(\lambda_1))(1+\lambda L'_{S}(\lambda_1))}{\left(\lambda_1-\lambda\left(1-L_{S}(\lambda_1)\right)\right)^3}\bigg),
\end{align}
where $L'_{S}(\lambda_1)$ and $L''_{S}(\lambda_1)$  for  the exponential service time are computed according to \eqref{dr001} as
\begin{align}\label{02158lk0}
L_{S}(\lambda_1)&=\int_{0}^{\infty}\mu e^{-(\mu+\lambda_1) s}\mathrm{d}s=\dfrac{\mu}{\mu+\lambda_1},\quad
L'_{S}(\lambda_1)=-\int_{0}^{\infty}s\mu e^{-(\mu+\lambda_1) s}\mathrm{d}s=-\dfrac{\mu}{(\mu+\lambda_1)^2},\\\nonumber
L''_{S}(\lambda_1)&=\int_{0}^{\infty}s^2\mu e^{-(\mu+\lambda_1) s}\mathrm{d}s=\dfrac{2\mu}{(\mu+\lambda_1)^3}.
\end{align}
%
%
%
Finally, by substituting $\mathbb{E}[W]$, $L_{T}(\lambda_1)$, $L'_{T}(\lambda_1)$, and $L''_{T}(\lambda_1)$ into \eqref{mm1a} we get the result in Theorem \ref{exactmm1theorem} in Section \ref{Summary of the Main Results}, i.e., the average AoI of source 1 for a multi-source M/M/1 queueing model is given as
\begin{align}\label{mm10a0}
\Delta_1&\!= \!\lambda^2_1(1-\rho)\Psi(\mu,\rho_1,\lambda_2)\!+\!\dfrac{1 }{\mu}\bigg(\dfrac{1 }{\rho_1}\!+\!
\dfrac{\rho}{1-\rho}\!+\!\dfrac{(2\rho_2-1)(\rho - 1)}{(1-\rho_2)^2}+\dfrac{2\rho_1\rho_2(\rho - 1)}{(1-\rho_2)^3}\bigg).
\end{align}

\textbf{Remark 2.}
It is worth noting that \eqref{mm10a0} does not coincide with the prior result \cite[Theorem.~1]{8469047} and \cite[Eq.~(16)]{6284003}. The dissimilarity is explained in the following. The authors of \cite{6284003,8469047}  considered a similar two-source FCFS M/M/1 queueing model, with the aim of deriving a  closed-form expression for the  average AoI of source 1 ($\Delta_1$).
Let us focus on \cite[Eq.~(33)]{6284003} where the authors compute a conditional expectation equivalent to our  $\mathbb{E}[W_{1,i}X_{1,i}|E_{1,i}^{\mathrm{L}}]$ given by \eqref{01mnk}, which by \eqref{WL_MM1} can be expressed as
\begin{align}\label{0022mn}
\mathbb{E}[W_{1,i}X_{1,i}|E^{\mathrm{L}}_{1,i}]=\mathbb{E}\Big[\textstyle\sum_{i'\in \mathcal{\hat{M}}_{2,i}^{\mathrm{L}}}S_{2,i'}X_{1,i}|E^{\mathrm{L}}_{1,i}\Big].
\end{align}
The  authors of \cite{6284003} tacitly assumed conditional independency between $\textstyle\sum_{i'\in \mathcal{\hat{M}}_{2,i}^{\mathrm{L}}}S_{2,i'}$ and $X_{1,i}$ under the event ${E^{\mathrm{L}}_{1,i}=\{T_{1,i-1}<X_{1,i}\}}$, and calculated \eqref{0022mn} as a multiplication of two  expectations as
\begin{align}\label{00022mn}
\mathbb{E}[W_{1,i}X_{1,i}|E^{\mathrm{L}}_{1,i}]=
\mathbb{E}\Big[\textstyle\sum_{i'\in \mathcal{\hat{M}}_{2,i}^{\mathrm{L}}}S_{2,i'}|T_{1,i-1}<X_{1,i}\Big]\mathbb{E}\Big[X_{1,i}|T_{1,i-1}<X_{1,i}\Big].
\end{align}
The critical point is that even if $X_{1,i}$ is independent of $T_{1,i-1}$, they become dependent when conditioned on the event ${E_{1,i}^{\mathrm{L}}=\{T_{1,i-1}<{X_{1,i}}\}}$, as in \eqref{0022mn}. This conditional dependency is violated by the separation of the expectations in \eqref{00022mn} because the quantity $\hat{M}_{2,i}^{\mathrm{L}}$ in general depends on both $T_{1,i-1}$ and ${X_{1,i}}$, and, thus, the multiplicative quantities $\textstyle\sum_{i'\in \mathcal{\hat{M}}_{2,i}^{\mathrm{L}}}S_{2,i'}$ and $X_{1,i}$ are dependent under the event $E_{1,i}^{\mathrm{L}}$. Note that we incorporate this conditional dependency in calculating $\mathbb{E}[W_{1,i}X_{1,i}|E^{\mathrm{L}}_{1,i}]$ by using the conditional joint PDF $f_{X_{1,i},T_{1,i-1}|E^{\mathrm{L}}_{1,i}}(x,t)$.

\textbf{Remark 3.}
It is worth to note that \eqref{mm10a0} neither coincides with our prior result \cite[Eq.~(25)]{ourITW01}. The dissimilarity comes from the fact that in \cite{ourITW01},  we  wrongly used steady-state properties of a queueing system in calculating ${\mathbb{E}\left[\hat{M}_{2,i}^{\mathrm{L}}|X_{1,i}=x,T_{1,i-1}=t,E_{1,i}^{\mathrm{L}}\right]}$ in \eqref{bnbm}.
{\section{ Approximate Expressions for the Average AoI in a Multi-Source M/G/1 Queueing Model}\label{An Approximation for the Average AoI in a Multi-Source M/G/1 Queueing Model}
In this section, we derive the three \emph{approximate} expressions of the average AoI in \eqref{oointr00103} for a multi-source M/G/1 queueing model that were presented in Section II.B. Recall that the exact expressions for the first and second conditional expectation  terms of \eqref{mnbv0} are given by \eqref{mnb50} and \eqref{nmb00}, respectively. 
 From \eqref{mg1main} and  \eqref{mg1main0}, the third conditional expectation is given as
 \begin{align}\nonumber
 &\mathbb{E}[(S_{1,i}^{\mathrm{L}}\!+\!R^{\mathrm{L}}_{2,i})X_{1,i}|E_{1,i}^{\mathrm{L}}]\!\!=\!\!\dfrac{1}{\mu}\displaystyle\int_{0}^{\infty}\int_{0}^{\infty}x\mathbb{E}\left[M^{\mathrm{L}}_{2,i}|X_{1,i}=x,T_{1,i-1}=t,E_{1,i}^{\mathrm{L}}\right]f_{X_{1,i},T_{1,i-1}|E_{1,i}^{\mathrm{L}}}(x,t)\mathrm{d}x\mathrm{d}t\\&\label{mg1ma0in}+\displaystyle\int_{0}^{\infty}\int_{0}^{\infty}x\mathbb{E}\left[R^{\mathrm{L}}_{2,i}|X_{1,i}=x,T_{1,i-1}=t,E_{1,i}^{\mathrm{L}}\right]f_{X_{1,i},T_{1,i-1}|E_{1,i}^{\mathrm{L}}}(x,t)\mathrm{d}x\mathrm{d}t.
 \end{align}
Next, we propose three approximate calculations for the third conditional expectation term of \eqref{mnbv0},  given by \eqref{mg1ma0in},  differing in the way we approximate the terms  $\mathbb{E}\left[M^{\mathrm{L}}_{2,i}|X_{1,i}=x,T_{1,i-1}=t,E_{1,i}^{\mathrm{L}}\right]$ and $\mathbb{E}\left[R^{\mathrm{L}}_{2,i}|X_{1,i}=x,T_{1,i-1}=t,E_{1,i}^{\mathrm{L}}\right]$. }

 {\textbf{Approximation 1:}
 First, we neglect the possible residual service time of source $2$ packet that is  under service at the arrival instant of packet ${1,i}$.
 Second, we assume that the average number of packets of source $2$ that must be served before packet $1,i$ is equal to the average number of packets of source $2$ that are queued during the system time of packet $1,i-1$ ($T_{1,i-1}$). Thus, we assume  $\mathbb{E}\left[M^{\mathrm{L}}_{2,i}|X_{1,i}=x,T_{1,i-1}=t,E_{1,i}^{\mathrm{L}}\right]= \mathbb{E}\left[J_{2,i}^{\mathrm{L}}|X_{1,i}=x,T_{1,i-1}=t,E_{1,i}^{\mathrm{L}}\right]$, where, as defined previously, the random variable $J_{2,i}^{\mathrm{L}}$ represents the number of source $2$ packets in the system  at the departure instant of packet $1,i-1$ for the long event $E_{1,i}^{\mathrm{L}}$. With the simplifications above, \eqref{mg1ma0in} can be approximated as
 \begin{align}
 &\nonumber\mathbb{E}[(S_{1,i}^{\mathrm{L}}\!+\!R^{\mathrm{L}}_{2,i})X_{1,i}|E_{1,i}^{\mathrm{L}}]
 {\approx} \dfrac{1}{\mu}\int_{0}^{\infty}\int_{0}^{\infty}x\mathbb{E}\left[J_{2,i}^{\mathrm{L}}|X_{1,i}=x,T_{1,i-1}=t,E_{1,i}^{\mathrm{L}}\right]f_{X_{1,i},T_{1,i-1}|E^{\mathrm{L}}_{1,i}}(x,t)\mathrm{d}x\mathrm{d}t
 \\&\label{mnvbh00}\stackrel{(a)}{=}
 \rho_{2}\int_{0}^{\infty}\int_{0}^{\infty}tx f_{X_{1,i},T_{1,i-1}|E^{\mathrm{L}}_{1,i}}(x,t)\mathrm{d}x\mathrm{d}t\\
 &\nonumber\stackrel{(b)}{=}\dfrac{\rho_{2}}{P(E^{\mathrm{L}}_{1,i})}\int_{0}^{\infty}\int_{t}^{\infty}xt \lambda_1 e^{-\lambda_1 x}f_{T_{1,i-1}}(t)\mathrm{d}x\mathrm{d}t\\
 &\nonumber=\dfrac{\rho_{2}}{P(E^{\mathrm{L}}_{1,i})}\int_{0}^{\infty}\bigg(t^2e^{-\lambda_1t}f_{T_{1,i-1}}(t)+\dfrac{te^{-\lambda_1t}}{\lambda_1}f_{T_{1,i-1}}(t)\bigg)\mathrm{d}t\\&\nonumber\stackrel{(c)}{=} \dfrac{\rho_{2}}{P(E^{\mathrm{L}}_{1,i})}\bigg(L''_{T}(\lambda_1)-\dfrac{L'_{T}(\lambda_1)}{\lambda_1}\bigg),
 \end{align}
 where  $ (a) $ comes from the fact that $\mathbb{E}\left[J_{2,i}^{\mathrm{L}}|X_{1,i}=x,T_{1,i-1}=t,E_{1,i}^{\mathrm{L}}\right]=\lambda_{2}t$,  $(b)$ follows from Lemma \ref{gh0fd00}, and $ (c) $ follows from \eqref{dr01}.}

 {By substituting the probabilities $P(E^{\mathrm{B}}_{1,i})$ and $P(E^{\mathrm{L}}_{1,i})$ given by Lemma \ref{lemmii1} and the three derived conditional expectation terms  \eqref{mnb50},  \eqref{nmb00}, and \eqref{mnvbh00} into \eqref{mnbv0}, an approximation for $ \mathbb{E}[X_{1,i}W_{1,i}] $ can be expressed as
 \begin{align}\label{mn01nm0}
 \mathbb{E}[X_{1,i}W_{1,i}] \approx\dfrac{1}{\lambda_1}\left(\mathbb{E}[W]+\dfrac{1}{\mu}+\dfrac{2(\rho_{2}-1)}{\lambda_1}+ \dfrac{2(1-\rho_{2})}{\lambda_1}L_{T}(\lambda_1)+(\rho_{2}-1)L'_{T}(\lambda_1)\right).
 \end{align}
 By substituting \eqref{mn01nm0} and \eqref{mbn0} into \eqref{oointr00103}, an approximation for the average AoI of source $1$ in a multi-source M/G/1 queueing model is given as
\begin{align}\label{main542}
\Delta^{\text{app}_1}_{1}&\approx\mathbb{E}[W]+\dfrac{2}{\mu}+\dfrac{2\rho_{2}-1}{\lambda_1}+ \dfrac{2(1-\rho_{2})}{\lambda_1}L_{T}(\lambda_1)+(\rho_{2}-1)L'_{T}(\lambda_1),
\end{align}
 where the quantities $\mathbb{E}[W]$, $L_{T}(\lambda_1)$, and $L'_{T}(\lambda_1)$ are calculated by \eqref{queue000} -- \eqref{02158lk0} for a specific service time distribution.}

 {\textbf{Approximation 2:}
 First, we assume that the average residual service time of source $2$ packet that is  under service at the arrival instant of packet ${1,i}$ is equal to the average service time of one packet in the system. Thus, we assume that
 ${\mathbb{E}\left[R^{\mathrm{L}}_{2,i}|X_{1,i}=x,T_{1,i-1}=t,E_{1,i}^{\mathrm{L}}\right]=\dfrac{1}{\mu}}$.
 Second, for the term $\mathbb{E}\left[M^{\mathrm{L}}_{2,i}|X_{1,i}=x,T_{1,i-1}=t,E_{1,i}^{\mathrm{L}}\right]$ we use the same approximation as we  used for Approximation 1, i.e.,  $\mathbb{E}\left[M^{\mathrm{L}}_{2,i}|X_{1,i}=x,T_{1,i-1}=t,E_{1,i}^{\mathrm{L}}\right]= \mathbb{E}\left[J_{2,i}^{\mathrm{L}}|X_{1,i}=x,T_{1,i-1}=t,E_{1,i}^{\mathrm{L}}\right]$. Based on these simplifications, \eqref{mg1ma0in} can be approximated as
 \begin{align}
 &\nonumber\mathbb{E}[(S_{1,i}^{\mathrm{L}}\!+\!R^{\mathrm{L}}_{2,i})X_{1,i}|E_{1,i}^{\mathrm{L}}]
 {\approx} \dfrac{1}{\mu}\int_{0}^{\infty}\int_{0}^{\infty}x\mathbb{E}\left[J_{2,i}^{\mathrm{L}}|X_{1,i}=x,T_{1,i-1}=t,E_{1,i}^{\mathrm{L}}\right]f_{X_{1,i},T_{1,i-1}|E^{\mathrm{L}}_{1,i}}(x,t)\mathrm{d}x\mathrm{d}t
 \\&\label{mnvbh000}+\dfrac{1}{\mu}\displaystyle\int_{0}^{\infty}\int_{0}^{\infty}xf_{X_{1,i},T_{1,i-1}|E_{1,i}^{\mathrm{L}}}(x,t)\mathrm{d}x\mathrm{d}t\\&\nonumber{=}
 \rho_{2}\int_{0}^{\infty}\int_{0}^{\infty}tx f_{X_{1,i},T_{1,i-1}|E^{\mathrm{L}}_{1,i}}(x,t)\mathrm{d}x\mathrm{d}t+\dfrac{1}{\mu}\displaystyle\int_{0}^{\infty}\int_{0}^{\infty}xf_{X_{1,i},T_{1,i-1}|E_{1,i}^{\mathrm{L}}}(x,t)\mathrm{d}x\mathrm{d}t\\
 &\nonumber\stackrel{}{=} \dfrac{1}{P(E^{\mathrm{L}}_{1,i})}\bigg(\rho_{2}L''_{T}(\lambda_1)-\left(\dfrac{\rho_2}{\lambda_1}+\dfrac{1}{\mu}\right)L'_{T}(\lambda_1)+\dfrac{L_{T}(\lambda_1)}{\mu\lambda_1}\bigg).
 \end{align}
Using \eqref{mnvbh000} and following the steps used to derive \eqref{mn01nm0}, an approximation for $ \mathbb{E}[X_{1,i}W_{1,i}] $ under Approximation 2 is given  as
\begin{align}\label{mn01nm02}
\mathbb{E}[X_{1,i}W_{1,i}] \approx\dfrac{1}{\lambda_1}\left(\mathbb{E}[W]\!+\!\dfrac{1}{\mu}\!+\!\dfrac{2(\rho_{2}\!-\!1)}{\lambda_1}\!+\! \left(\dfrac{1}{\mu}\!+\!\dfrac{2(1\!-\!\rho_{2})}{\lambda_1}\right)L_{T}(\lambda_1)\!+\!\left(\rho_{2}\!-\!1\!-\!\dfrac{\lambda_1}{\mu}\right)L'_{T}(\lambda_1)\right).
\end{align}
By substituting \eqref{mn01nm02} and \eqref{mbn0} into \eqref{oointr00103}, an approximation for the average AoI of source $1$ in a multi-source M/G/1 queueing model is given as
 \begin{align}\label{main543}
 \Delta^{\text{app}_2}_{1}&\approx\mathbb{E}[W]\!+\!\dfrac{2}{\mu}\!+\!\dfrac{2\rho_{2}\!-\!1}{\lambda_1}\!+\! \left(\dfrac{1}{\mu}\!+\!\dfrac{2(1\!-\!\rho_{2})}{\lambda_1}\right)L_{T}(\lambda_1)\!+\!\left(\rho_{2}\!-\!1\!-\!\dfrac{\lambda_1}{\mu}\right)L'_{T}(\lambda_1).
 \end{align}}

 {\textbf{Approximation 3:}
We assume that the queue is in the stationary state. In other words, first, we assume that the average residual service time of source $2$ packet that is  under service at the arrival instant of packet ${1,i}$ is equal to the average residual service time of a stationary M/G/1 queue that has only source 2 packet arrivals. Thus, we assume  that ${\mathbb{E}\left[R^{\mathrm{L}}_{2,i}|X_{1,i}=x,T_{1,i-1}=t,E_{1,i}^{\mathrm{L}}\right]=\dfrac{\lambda_2\mathbb{E}[S^2]}{2}}$ \cite[Eq. (3.52)]{bertsekas1992data}.   Second, we assume that the average number of source 2 packets that must be served before packet $1,i$ is equal to the average number of packets in a stationary M/G/1 queue with only source 2 packet arrivals. Thus, we assume $\mathbb{E}\left[M^{\mathrm{L}}_{2,i}|X_{1,i}=x,T_{1,i-1}=t,E_{1,i}^{\mathrm{L}}\right]= \dfrac{\lambda^2_2\mathbb{E}[S^2]}{2(1-\rho_2)}$ \cite[Eq. (3.43)]{bertsekas1992data}. Thus, the third conditional expectation in \eqref{mg1ma0in} is approximated as follows:
 \begin{align}\nonumber
 &\mathbb{E}[(S_{1,i}^{\mathrm{L}}\!+\!R^{\mathrm{L}}_{2,i})X_{1,i}|E_{1,i}^{\mathrm{L}}]\!\approx\!\dfrac{\lambda^2_2\mathbb{E}[S^2]}{2\mu(1-\rho_2)}\displaystyle\int_{0}^{\infty}\int_{0}^{\infty}xf_{X_{1,i},T_{1,i-1}|E_{1,i}^{\mathrm{L}}}(x,t)\mathrm{d}x\mathrm{d}t\\&\label{mg1mai0n}+\displaystyle\dfrac{\lambda_2\mathbb{E}[S^2]}{2}\int_{0}^{\infty}\int_{0}^{\infty}xf_{X_{1,i},T_{1,i-1}|E_{1,i}^{\mathrm{L}}}(x,t)\mathrm{d}x\mathrm{d}t\\&\nonumber\stackrel{}{=}
 \dfrac{\lambda_2\mathbb{E}[S^2]}{2(1-\rho_2)P(E^{\mathrm{L}}_{1,i})}\left( \dfrac{L_{T}(\lambda_1)}{\lambda_1}-L'_{T}(\lambda_1)\right).
 \end{align}
Using \eqref{mg1mai0n} and following the steps used to derive \eqref{mn01nm0}, an approximation for $ \mathbb{E}[X_{1,i}W_{1,i}] $ under Approximation 3 is given  as
\begin{align}\label{mn01nm03}
\mathbb{E}[X_{1,i}W_{1,i}] &\approx\dfrac{1}{\lambda_1}\bigg(\mathbb{E}[W]+\dfrac{1}{\mu}+\dfrac{2(\rho_{2}-1)}{\lambda_1}+ \left(\dfrac{\lambda_2\mathbb{E}[S^2]}{2(1-\rho_2)}+\dfrac{2(1-\rho_{2})}{\lambda_1}\right)L_{T}(\lambda_1)+\\&\nonumber\left(2\rho_{2}-1-\dfrac{\lambda_1\lambda_2\mathbb{E}[S^2]}{2(1-\rho_2)}\right)L'_{T}(\lambda_1)-\lambda_1\rho_2L''_{T}(\lambda_1)\bigg).
\end{align}
By substituting \eqref{mn01nm03} and \eqref{mbn0} into \eqref{oointr00103}, an approximation for the average AoI of source $1$ in a multi-source M/G/1 queueing model is given as
\begin{align}\label{main5433}
\Delta^{\text{app}_3}_{1}&\approx\mathbb{E}[W]+\dfrac{2}{\mu}+\dfrac{2\rho_{2}-1}{\lambda_1}+ \left(\dfrac{\lambda_2\mathbb{E}[S^2]}{2(1-\rho_2)}+\dfrac{2(1-\rho_{2})}{\lambda_1}\right)L_{T}(\lambda_1)+\\&\nonumber\left(2\rho_{2}-1-\dfrac{\lambda_1\lambda_2\mathbb{E}[S^2]}{2(1-\rho_2)}\right)L'_{T}(\lambda_1)-\lambda_1\rho_2L''_{T}(\lambda_1).
\end{align}}

{\subsection{Single-Source M/G/1 Queueing Model}
For $\lambda_2\rightarrow 0$, we have a single-source M/G/1 queueing model. In this case, it can be shown that  \eqref{main542} and \eqref{main5433} provide the following expression for the average AoI:
	\begin{align}\label{main54002}
	\Delta&=\mathbb{E}[W]+\dfrac{2}{\mu}+\dfrac{2L_{T}(\lambda_1)}{\lambda_1}-L'_{T}(\lambda_1)-\dfrac{1}{\lambda_1}.
	\end{align}
Using  \eqref{queue000}, \eqref{000bn0v001b0012}, and \eqref{rgn01}, the quantities $\mathbb{E}[W]$, $L_{T}(\lambda)$,  and $L'_{T}(\lambda)$ are calculated as
	$$\mathbb{E}[W]=\dfrac{\mathbb{E}[S^2]\lambda}{2(1-\rho)},{\quad}L_{T}(\lambda)=1-\rho,{\quad}L'_{T}(\lambda)=\dfrac{(1-\rho)(L_{S}(\lambda)-1)}{\lambda L_{S}(\lambda)}.$$
	By substituting $\mathbb{E}[W]$, $L_{T}(\lambda)$, and $L'_{T}(\lambda)$  in  \eqref{main54002},  we have
	\begin{align}\label{01mn}
	\Delta= \dfrac{1}{\mu}+\dfrac{\lambda \mathbb{E}[S^2]}{2(1-\rho)}+\dfrac{1-\rho}{\lambda L_S(\lambda)},
	\end{align}
	which is an \emph{exact} expression for the average AoI of the single-source M/G/1 queueing case derived in \cite[Eq.~(22)]{8006592}.}

\section{Validation and Simulation Results}\label{Validation and results}
In this section, we first evaluate the average AoI  in a multi-source M/M/1 queueing model and compare our exact expression in \eqref{mm10a0} with the results in existing works \cite{6284003} and \cite{ourITW01}. Then, we evaluate the accuracy of the proposed three approximate expressions for the M/G/1 queueing model in \eqref{main542}, \eqref{main543}, and \eqref{main5433} under various service time distributions.

\subsection{Multi-Source M/M/1 Queueing Model}
Fig. \ref{multi100} depicts the average AoI of source $1$ ($\Delta_1$) as a function of $\lambda_1$ with $\lambda_2=0.6$ and ${\mu}=1$. As it can be seen, the simulation result and our proposed solution overlap perfectly. {We used \textit{``integral2''} command in MATLAB software to calculate the double integral in \eqref{01mnk}.} Due to the calculation errors in \cite{6284003} and \cite{ourITW01}, both curves have a gap to the correct average AoI value.

\begin{figure}[t!]
\centering
\includegraphics[width=0.45\linewidth,trim = 5mm 2mm 5mm 10mm,clip]{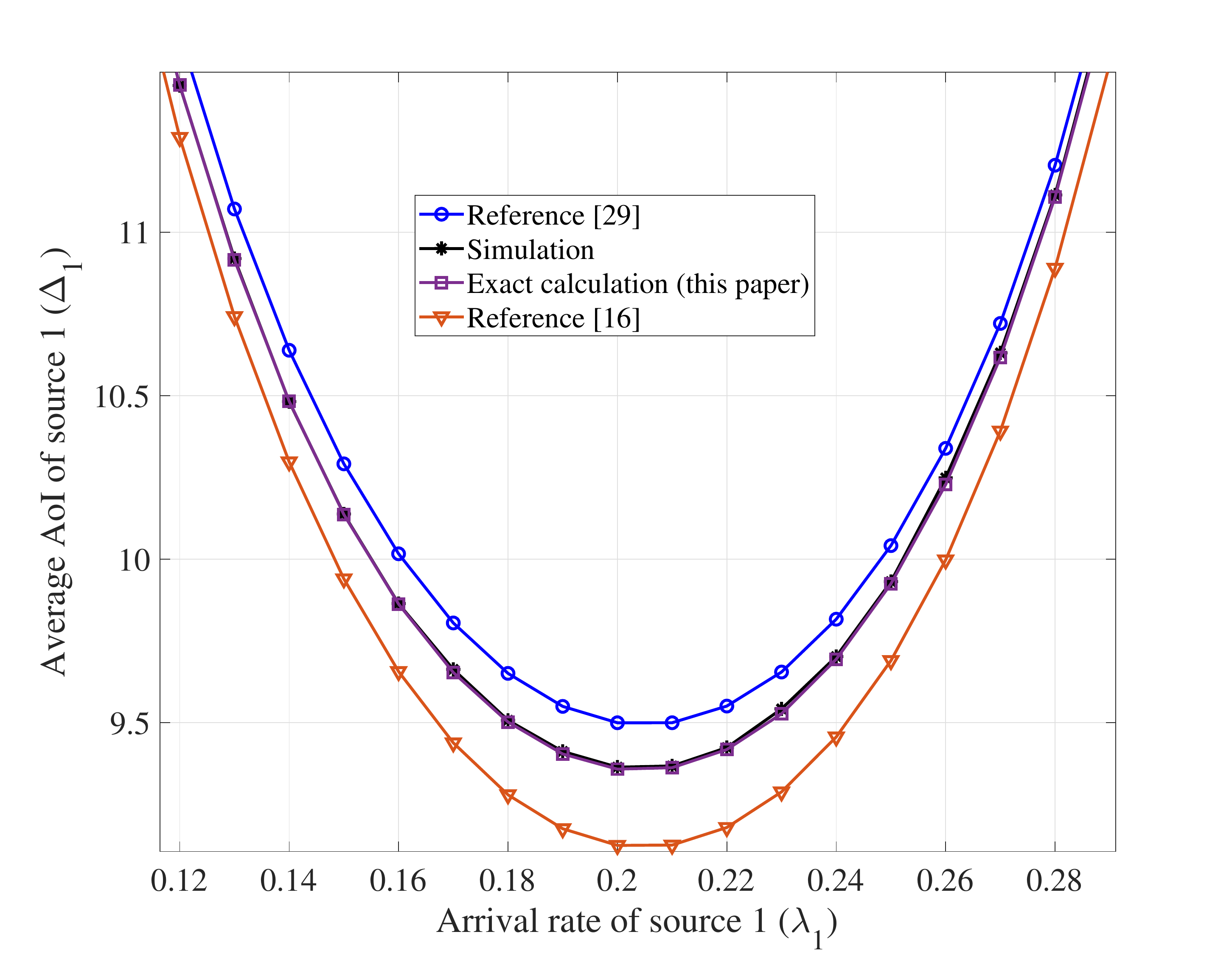}
\vspace{-3mm}
\caption{The average AoI of  source 1 as a function of  $\lambda_1$ with $\lambda_2=0.6$ and ${\mu}=1$.  }
\vspace{-9mm}
\label{multi100}
\end{figure}

The effect of  $\lambda_2$ on the average AoI  of source 1 is shown in Fig. \ref{multi10}. When $\lambda_2$ increases, the increased overall load in the system results in longer waiting time for packets of source 1 (and source 2), which increases $\Delta_1$. Note, however, that when $\lambda_2$ increases, the optimal value of $\lambda_1$ that minimizes $\Delta_1$ decreases. The figures illustrate that generating the status update packets too frequently or too rarely does not minimize the average AoI. 
Moreover, Fig. \ref{multi10} depicts the gap between the exact and approximate average AoI expressions. As it can be seen, the proposed approximations are relatively close to the exact one in the M/M/1 queueing model.

\begin{figure}[t!]
\centering
{\includegraphics[width=0.47\linewidth,trim = 5mm 0mm 5mm 5mm,clip]{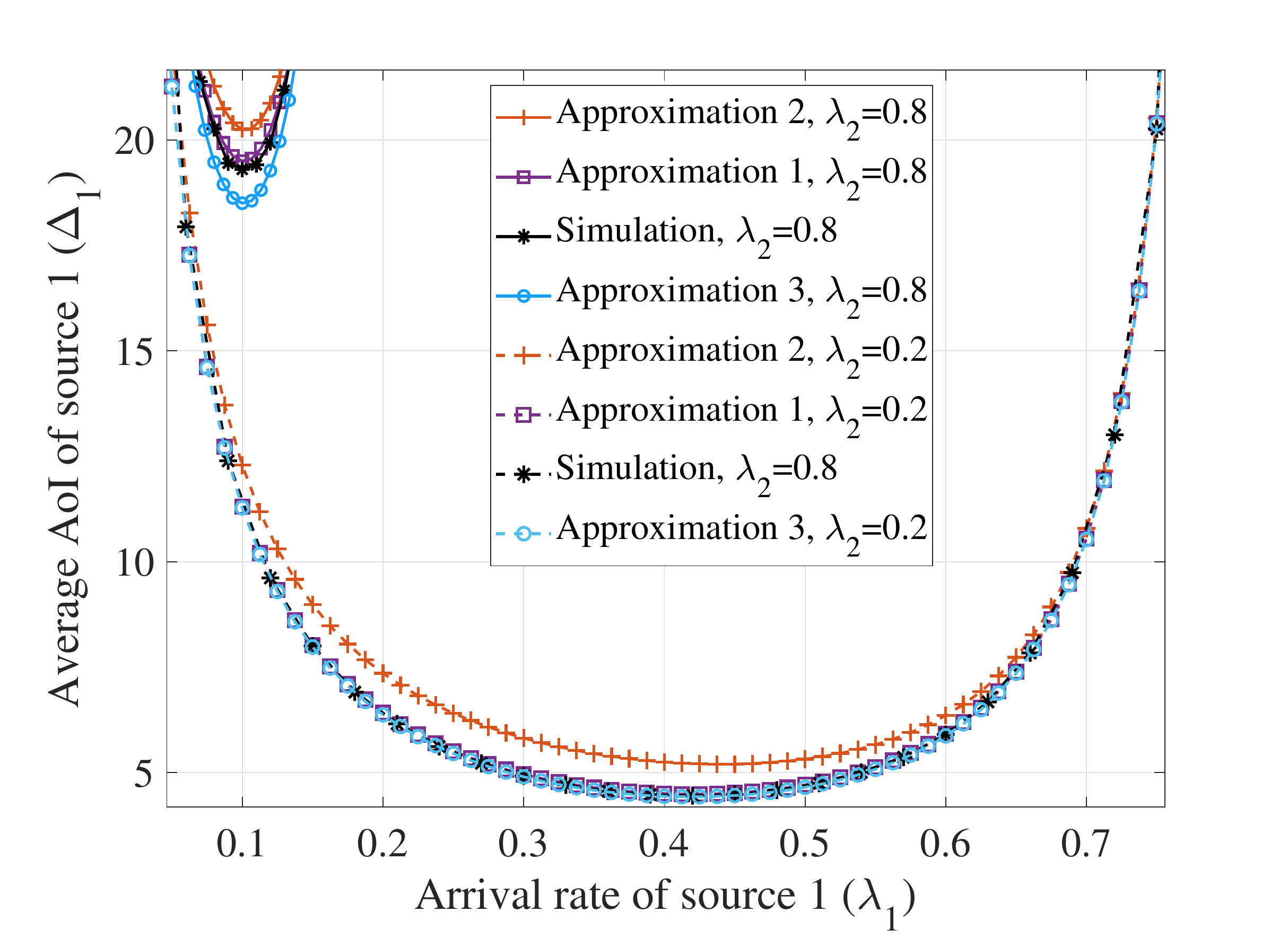}}\vspace{-2mm}
\caption{The average AoI of source 1 as a function of $\lambda_1$ for different values of $\lambda_2$ with $\mu=1$.}\vspace{-8mm}  			
\label{multi10}
\end{figure}

Fig. \ref{multi12} depicts the average delay of source 1 as a function of $\lambda_1$ for different values of $\lambda_2$ with $\mu=1$. The average delay is defined as the summation of the average waiting time and average service time i.e., $\mathbb{E}[W]+{1}/{\mu}$. As the number of arrivals of source 2 packets increases, the queue becomes more congested and the average delay of source 1 increases. By comparing Figs. \ref{multi10} and \ref{multi12} one can see that the delay does not fully capture the information freshness, i.e., minimizing the average system delay does not necessarily lead to a good performance in terms of AoI and, reciprocally,  minimizing the average AoI does not minimize the average system delay.

\begin{figure}
\centering
\includegraphics[width=0.45\linewidth,trim = 5mm 2mm 5mm 5mm,clip]{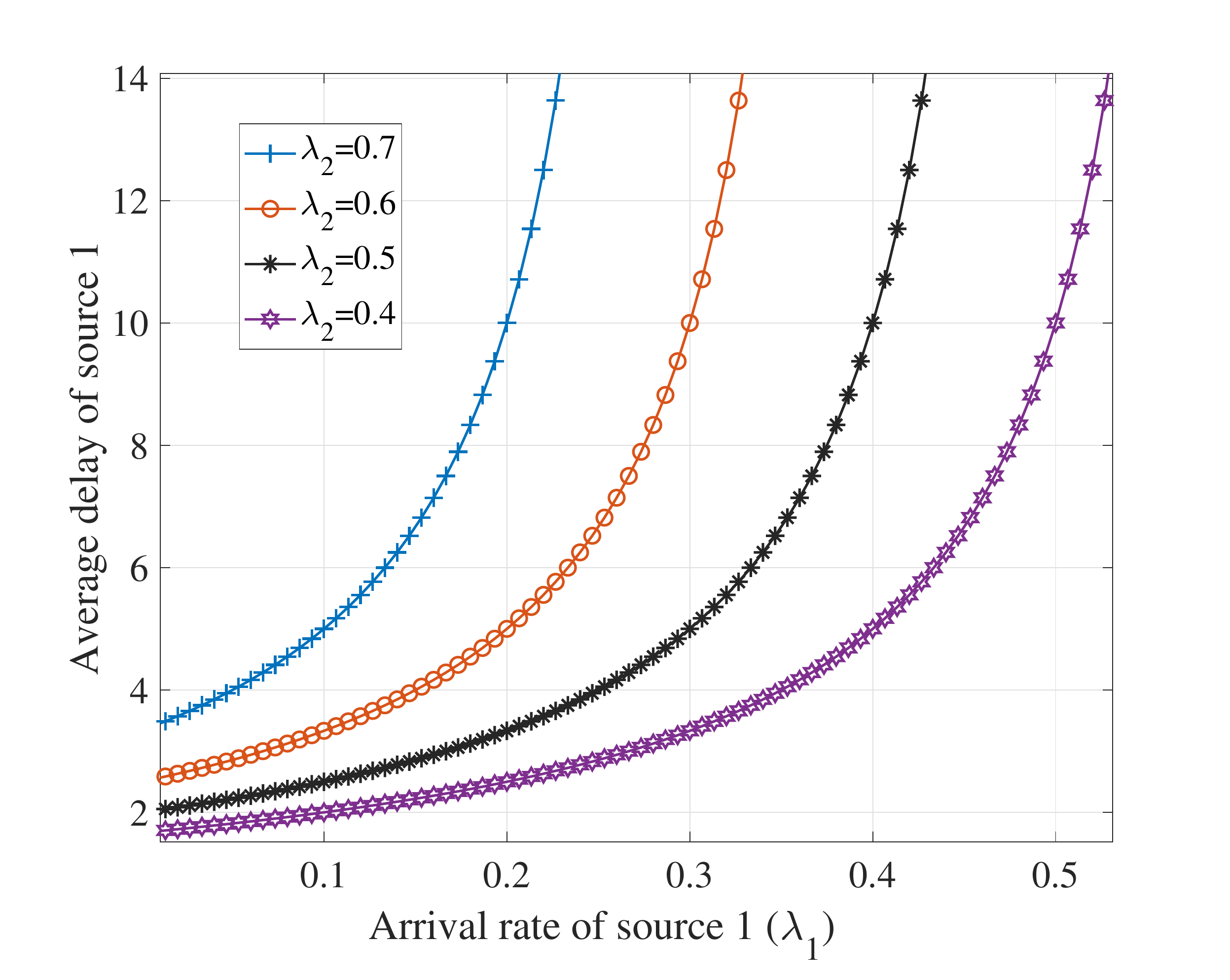}\vspace{-1mm}			 	
\caption{The average delay of  source 1 as a function of  $\lambda_1$ for different values of $\lambda_2$ with $\mu=1$.}
\vspace{-9mm}
\label{multi12}
\end{figure}

\subsection{Multi-Source M/G/1 Queueing Model}
In the section, we examine the accuracy of the proposed three approximations using the following service time distributions: i) Gamma distribution, ii) hyper-exponential distribution, iii) log-normal distribution, and iv)  Pareto distribution. In the following, we first define the distributions and then show the accuracy of the proposed approximations for each distribution.
{\textbf{Definition 1} (Gamma distribution).
The PDF of  a random variable $S$ following a  gamma distribution is defined as
$
f_S(s)=Gamma(s;\kappa,\beta)=\dfrac{\beta^{\kappa} s^{\kappa-1}\exp(-\beta s)}{\Gamma(\kappa)},~ \text{for}~s>0,$ and parameters $ \kappa>0$ and $\beta>0,
$ where $\Gamma(\kappa)$ is the gamma function at $\kappa$. The mean and variance of this random variable is $\mathbb{E}[S]={\kappa}/{\beta}$ and $\text{Var}[S]=\dfrac{\kappa}{\beta^2}$, respectively.}

{\textbf{Definition 2} (Hyper-exponential distribution).
The PDF of a random variable $S$ following a hyper-exponential distribution is defined as
$
f_S(s)=\sum_{k=1}^{N}f_{Y_k}(s)p_k,
$
where $Y_k$ is an exponentially distributed random variable with parameter $\gamma_k$, and $p_k$ is the weight factor of random variable $Y_k$ such that $\sum_{k=1}^{N}p_k=1$.  The mean and variance of this random variable are
$
\mathbb{E}[S]=\displaystyle\sum_{k=1}^{N}\dfrac{p_k}{\gamma_k}$ 
and 
$
\text{Var}[S]=\sum_{k=1}^{N}\dfrac{2p_k}{\gamma^2_k}-\left(\sum_{k=1}^{N}\dfrac{p_k}{\gamma_k}\right)^2,
$ respectively.}

{\textbf{Definition 3} (Log-normal distribution).
The PDF of a random variable $S$ following a log-normal distribution is defined as
$
f_S(s)=\dfrac{1}{s\sigma\sqrt{2\pi}}\exp\left(-\dfrac{(\ln{(s)}-\nu)^2}{2\sigma^2}\right),~ \text{for}~s>0$ and parameters $ \sigma>0$ and $\nu\in(-\infty,+\infty).
$
 The mean and variance of this random variable are 
$
\mathbb{E}[S]=\exp\left(\nu+\dfrac{\sigma^2}{2}\right)$ and $
\text{Var}[S]=\exp(2\nu+\sigma^2)\left(\exp(\sigma^2)-1\right),
$  respectively.
}

{\textbf{Definition 4} (Pareto distribution).
The PDF of a random variable $S$ following a Pareto distribution is defined as
$
f_S(s)=\dfrac{\alpha {\omega}^\alpha}{s^{\alpha+1}},\,\,\, \text{for}\,\,\,s\in[\omega,\infty]$ and parameters $ \omega>0$ and $\alpha>0.$
 The mean and variance of this random variable are 
\begin{align}\nonumber
&\mathbb{E}[S]=\begin{cases}
\infty&\alpha\le 1\\
\dfrac{\alpha \omega}{\alpha-1}&\alpha>1,
\end{cases}~~
\text{Var}[S]=\begin{cases}
\infty&\alpha\le 2\\
\dfrac{\alpha \omega^2}{(\alpha-1)^2(\alpha-2)}&\alpha>2.
\end{cases}
\end{align}
}


\begin{figure}
	\centering
	\subfloat[]{\includegraphics[width=0.45\linewidth,trim = 8mm 0mm 10mm 8mm,clip]{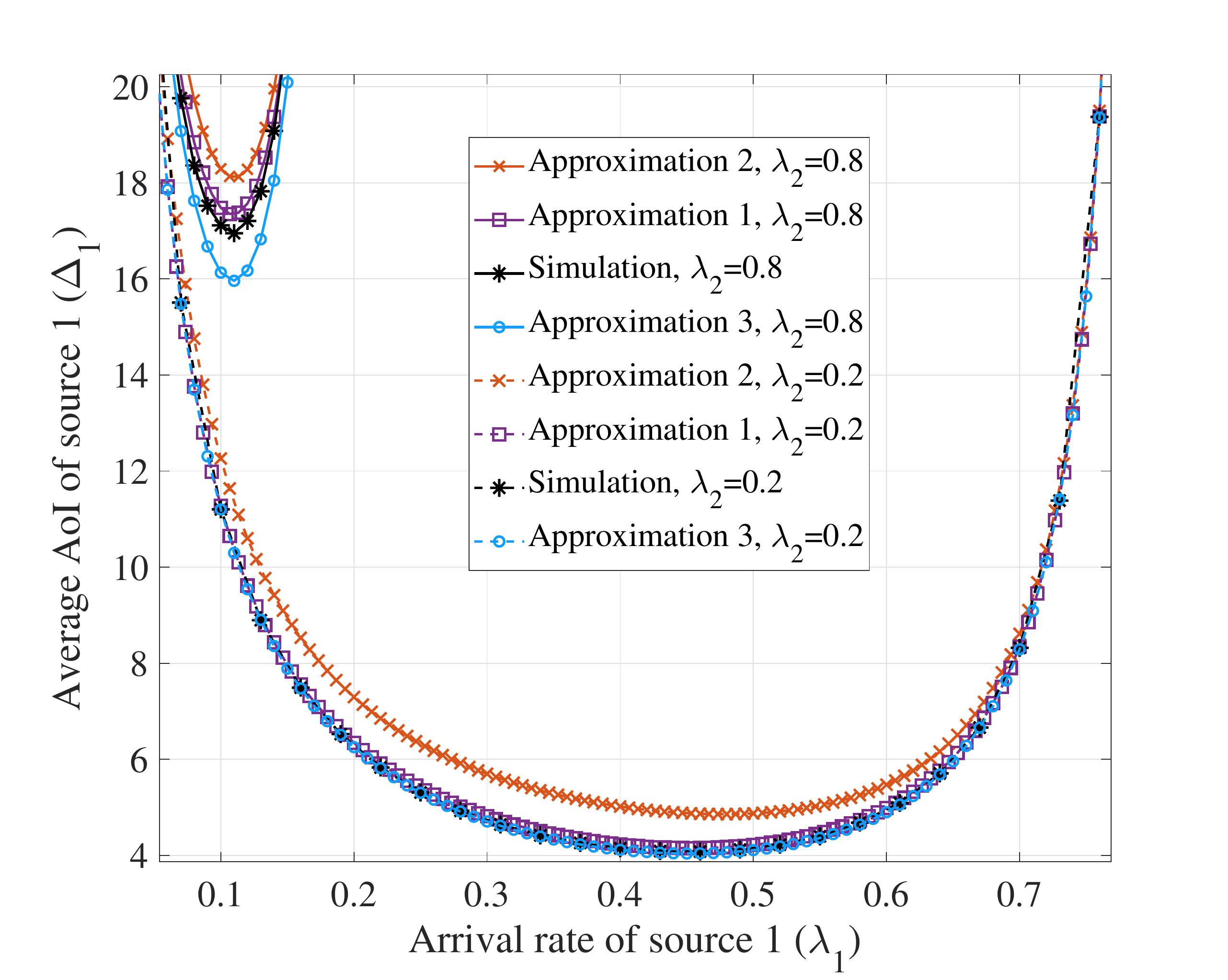}}\\\vspace{-4mm}
	\subfloat[]{\includegraphics[width=0.45\linewidth,trim = 6mm 0mm 10mm 8mm,clip]{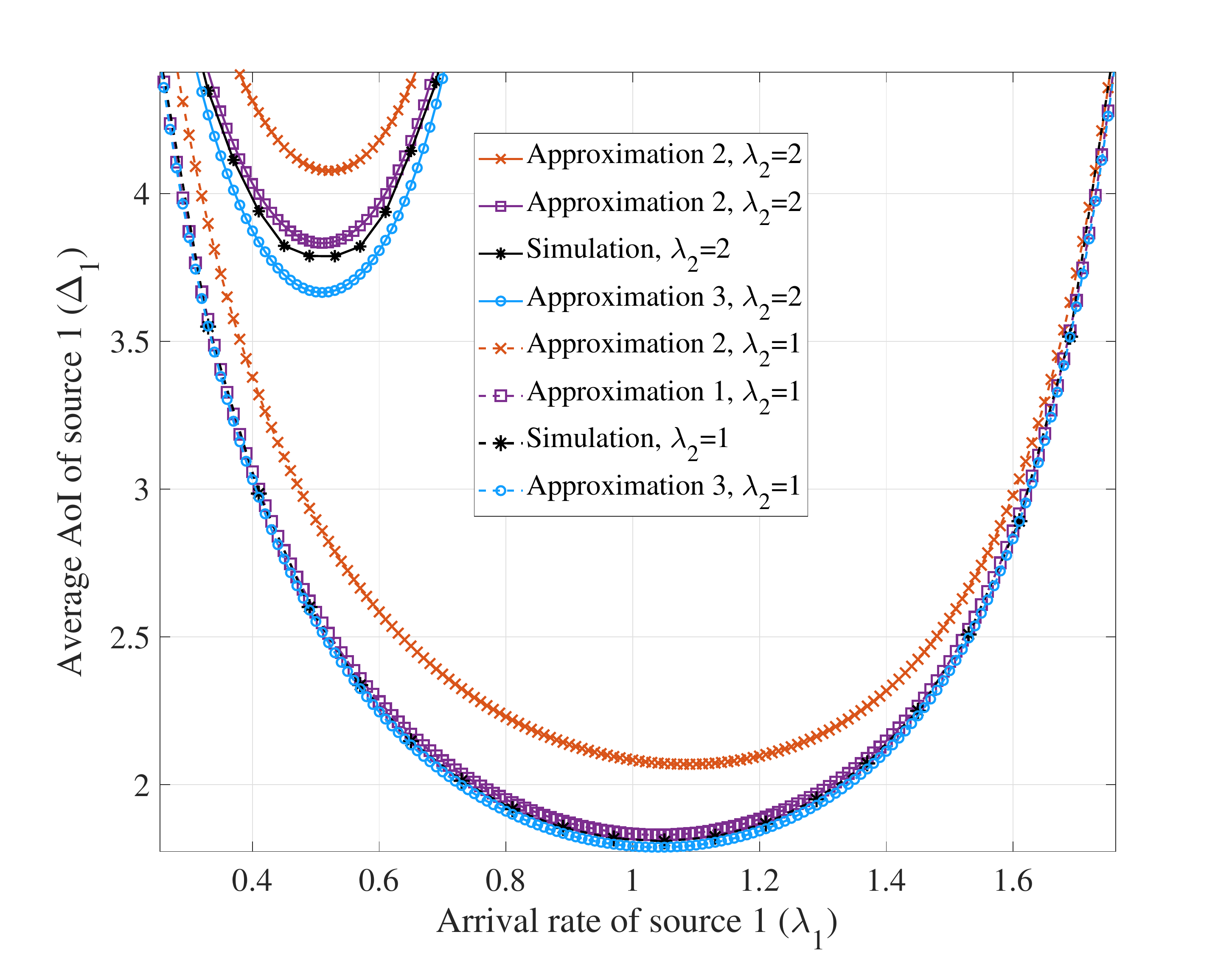}}\vspace{-4mm}
	\caption{The average AoI of  source 1 as a function of  $\lambda_1$ for different values  of  $\lambda_2$   with  the service time following a gamma distribution with parameters (a) $\kappa=2,~ \beta=2,$ and $\mu=1$,  and (b) $\kappa=1,~ \beta=3,$ and $ \mu=3$.}	
	\vspace{-9mm}		
	\label{Gamma12}
\end{figure}

\begin{figure}
	\centering
	\subfloat[]{\includegraphics[width=0.43\linewidth,trim = 10mm 0mm 15mm 10mm,clip]{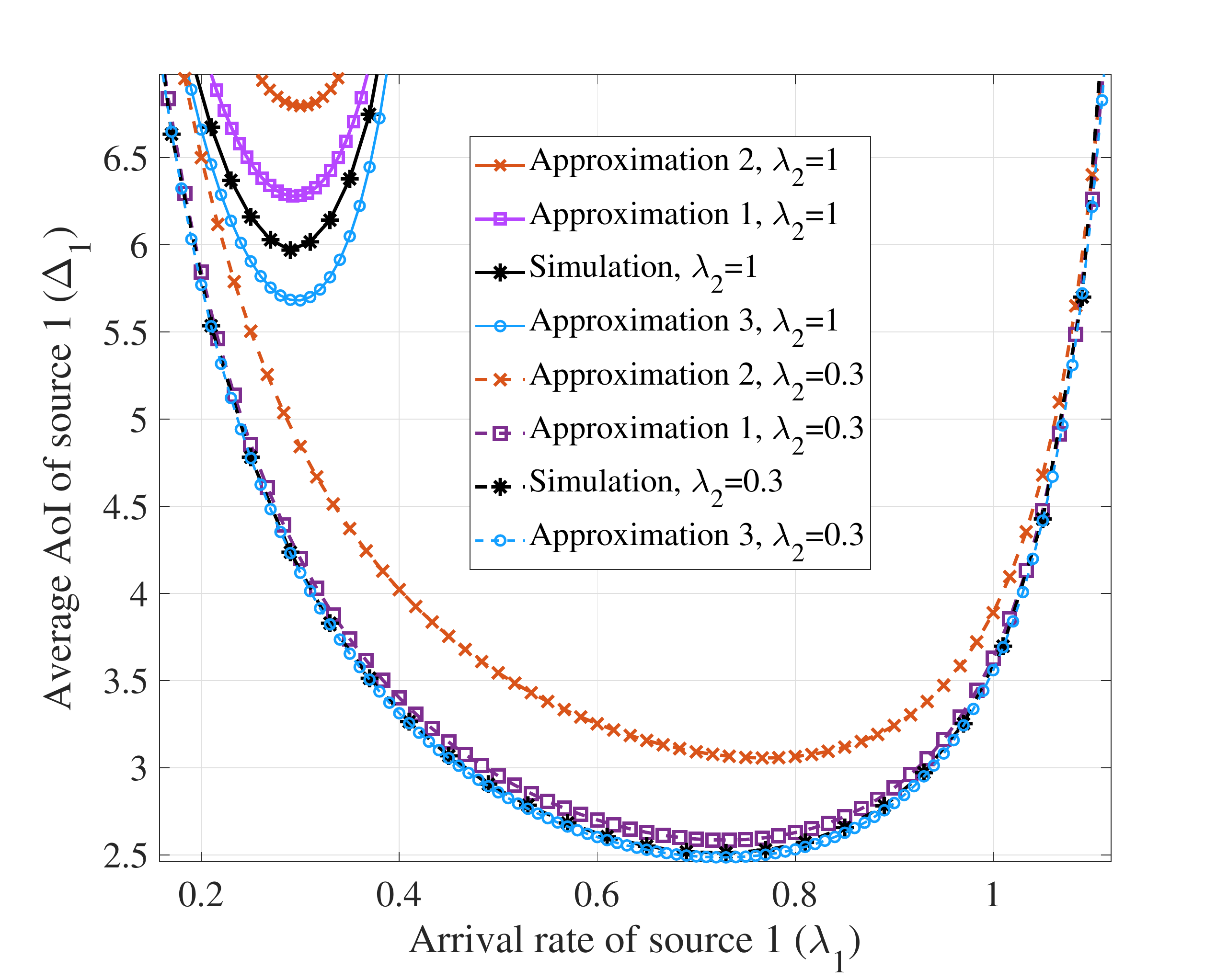}}\\\vspace{-4mm}
	\subfloat[]{\includegraphics[width=0.45\linewidth,trim = 8mm 0mm 15mm 10mm,clip]{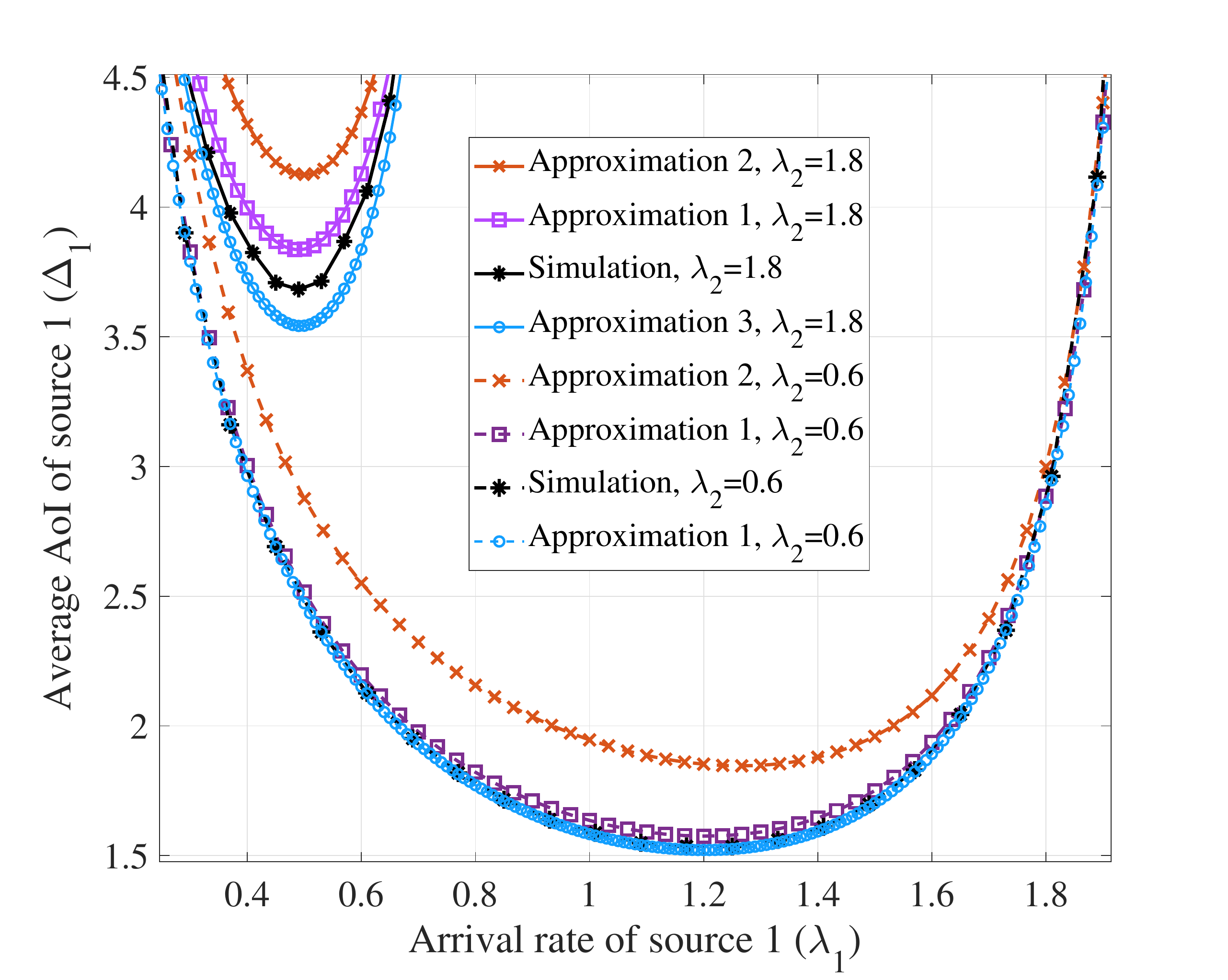}}\vspace{-4mm}
	\caption{The average AoI of  source 1 as a function of  $\lambda_1$ for different values  of  $\lambda_2$   with the  service time following a Pareto distribution  with parameters (a) $\omega=0.5,~ \alpha=4,$ and $ \mu=1.5$,  and (b) $\omega=0.25,~ \alpha=3,$ and $ \mu=8/3$.}
		\vspace{-10mm}					
	\label{Pareto12}
\end{figure}

\begin{figure}
	\centering
	\subfloat[]{\includegraphics[width=0.45\linewidth,trim = 10mm 0mm 15mm 10mm,clip]{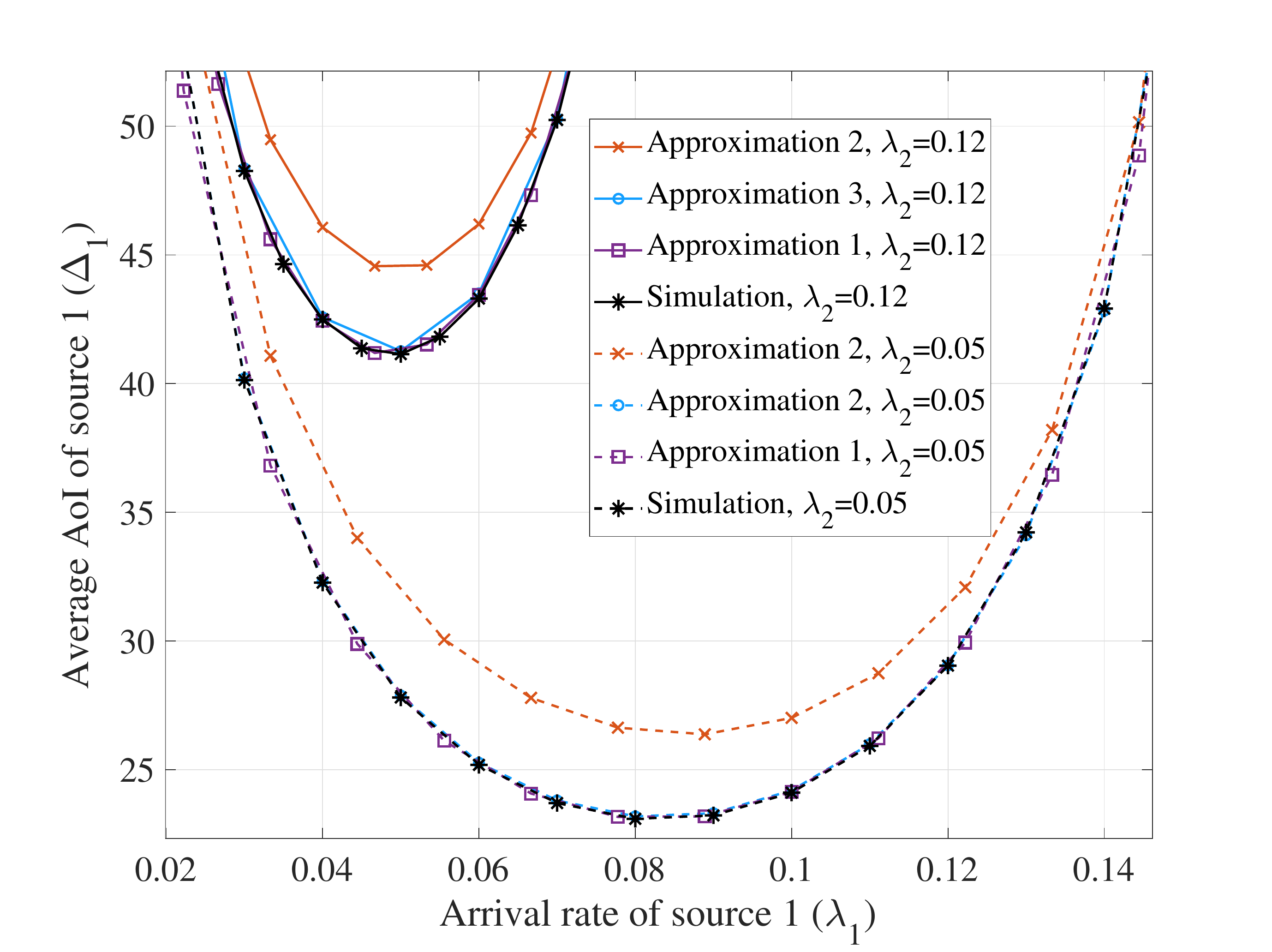}}\\\vspace{-4mm}
	\subfloat[]{\includegraphics[width=0.45\linewidth,trim = 8mm 0mm 15mm 10mm,clip]{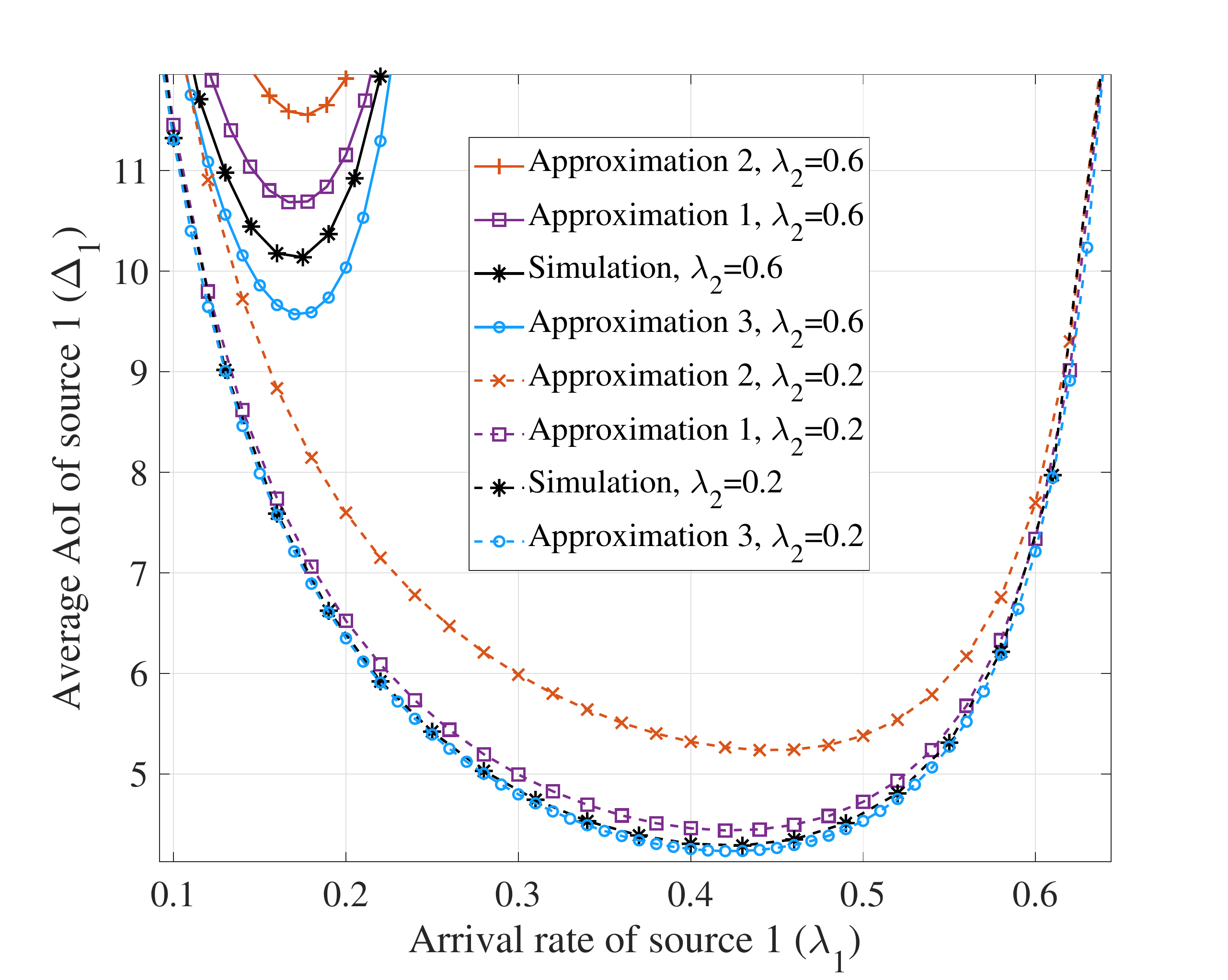}}\vspace{-4mm}
	\caption{The average AoI of  source 1 as a function of  $\lambda_1$ for different values  of  $\lambda_2$   with  the service time following a log-normal distribution with parameters (a) $\nu=1,~ \sigma=1,$ and $\mu=0.2231$,  and (b) $\nu=0.1,~ \sigma=0.2,$ and $ \mu=0.8869$.}	
		\vspace{-12mm}				
	\label{log12}
\end{figure}

\begin{figure}
	\centering
	\subfloat[]{\includegraphics[width=0.45\linewidth,trim = 10mm 0mm 15mm 10mm,clip]{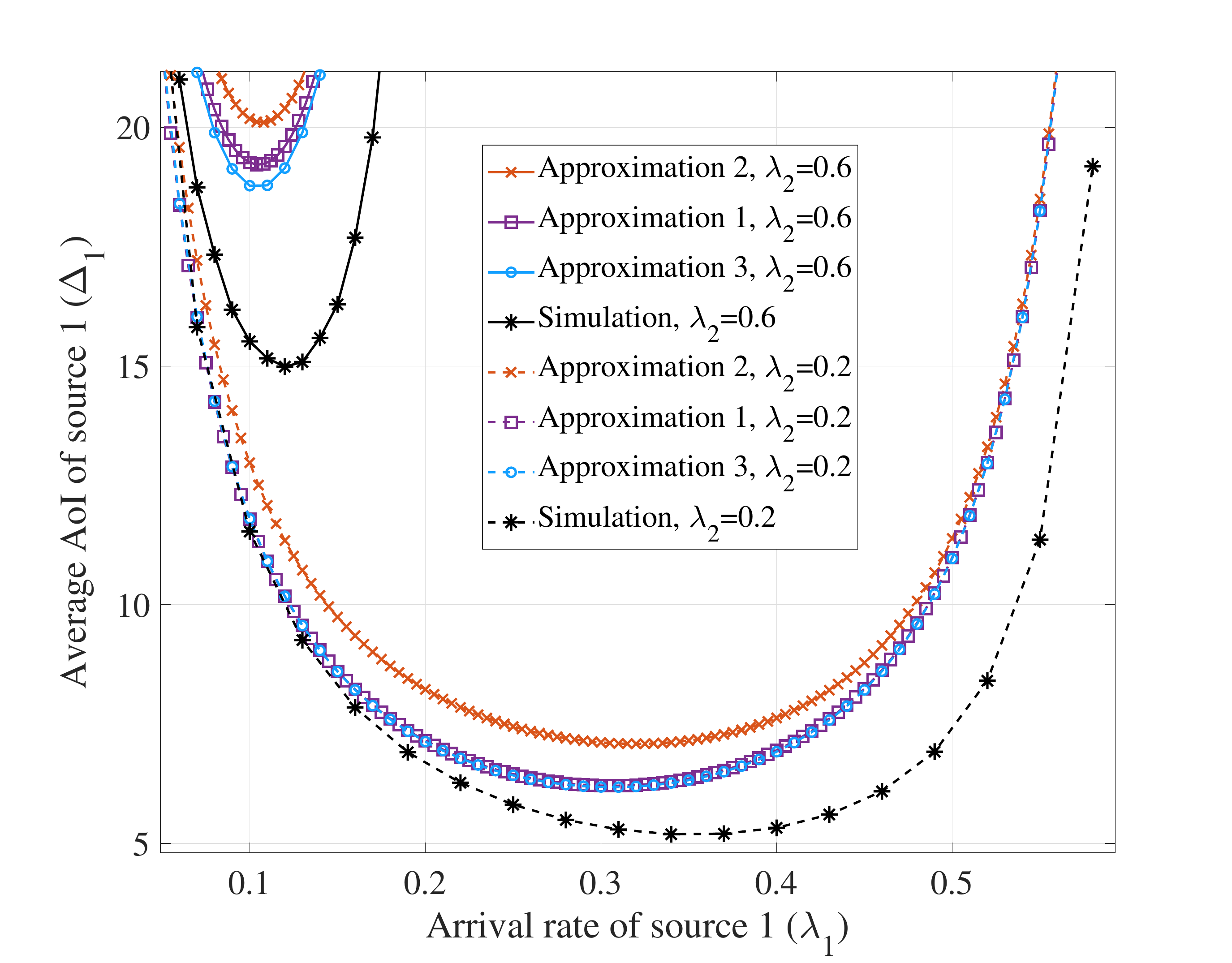}}\\
	\subfloat[]{\includegraphics[width=0.45\linewidth,trim = 8mm 0mm 15mm 10mm,clip]{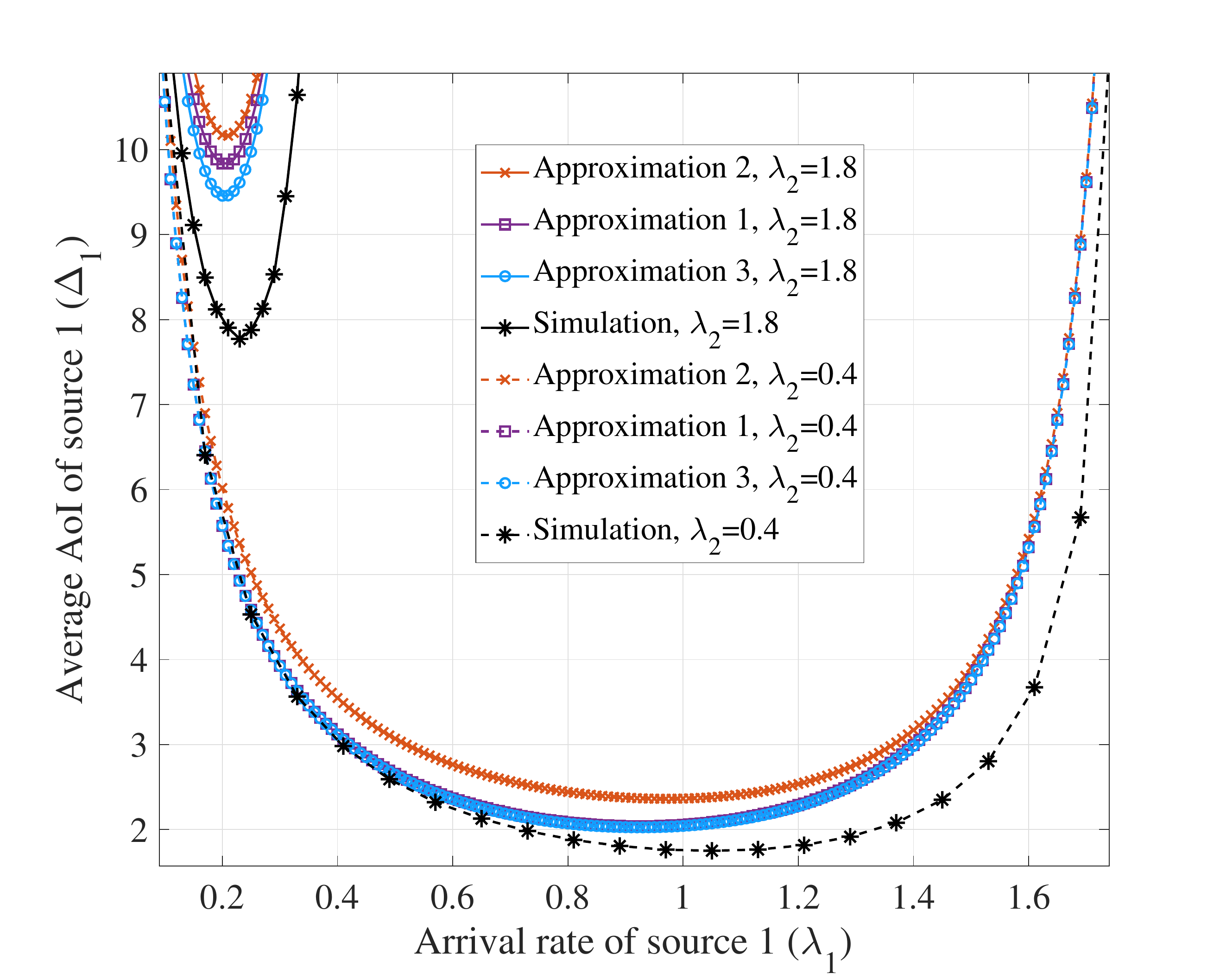}}\vspace{-3mm}
	\caption{The average AoI of  source 1 as a function of  $\lambda_1$ for different values  of  $\lambda_2$   with  the service time following a hyper-exponential distribution with parameters (a) $N=3,~\gamma_1=0.5,~\gamma_2=1,~\gamma_3=1.5, p_k=1/N,~\forall k\in\{1,2,3\},$ and $ \mu=0.8182$,  and (b) $N=3,~\gamma_1=1.5,~\gamma_2=2.5,~\gamma_3=3.5, p_k=1/N,~\forall k\in\{1,2,3\},$ and $ \mu=2.2183$.}
	\vspace{-9mm}			
	\label{Hyper12}
\end{figure}

{Figs. \ref{Gamma12}, \ref{Pareto12}, \ref{log12}, and \ref{Hyper12} depict the average AoI of source 1 as a function of  $\lambda_1$ for different service time distributions  under both heavy (a larger value of $\lambda_2$) and light (a smaller value of $\lambda_2$) traffic conditions of source 2. 
	Fig. \ref{Gamma12} illustrates the average AoI of  source 1  for different values  of  $\lambda_2$   with the service time following a gamma distribution with parameters   $\kappa=2, ~\beta=2,$ and $\mu=1$ in Fig. \ref{Gamma12}(a)  and  $\kappa=1,~ \beta=3,$ and  $\mu=3$ in Fig. \ref{Gamma12}(b). 
	Fig. \ref{Pareto12} illustrates the average AoI of  source 1  for different values  of  $\lambda_2$   with the service time following a Pareto distribution with parameters  $\omega=0.5,~ \alpha=4,$ and $\mu=1.5$ in Fig. \ref{Pareto12}(a)  and  $\omega=0.25,~ \alpha=3,$ and $\mu=8/3$ in Fig. \ref{Pareto12}(b). 
	Fig. \ref{log12} illustrates the average AoI of  source 1  for different values  of  $\lambda_2$   with  the  service time following a log-normal distribution with parameters  $\nu=1,~ \sigma=1,$ and  $\mu=0.2231$ in Fig. \ref{log12}(a)  and  $\nu=0.1,~ \sigma=0.2,$ and  $\mu=0.8869$ in Fig. \ref{log12}(b). 
	Fig. \ref{Hyper12} illustrates the average AoI of  source 1  for different values  of  $\lambda_2$   with  the service time following a hyper-exponential distribution with parameters  $N=3,~\gamma_1=0.5,~\gamma_2=1,~\gamma_3=1.5, p_k=1/N,~\forall k\in\{1,2,3\},$ and $\mu=0.8182$ in Fig. \ref{Hyper12}(a)  and  $N=3,~\gamma_1=1.5,~\gamma_2=2.5,~\gamma_3=3.5, p_k=1/N,~\forall k\in\{1,2,3\},$ and $ \mu=2.2183$ in Fig. \ref{Hyper12}(b).} 
{ As it can be seen, Approximation 1 and Approximation 3  are relatively tight  for  both the heavy and light traffic conditions under the gamma, Pareto, and log-normal distributions. 
	By comparing the curves of Approximation 1 and Approximation 2, we can see the effect of approximating the residual service time of source $2$ packet that is under service at the arrival instant of packet ${1,i}$ by the average service time of one packet in the system as compared to completely ignoring it. Finally, as expected, the average AoI provided by Approximation 2 is always higher than that of Approximation 1.}

\section{Conclusions}\label{conclusion}
We considered a single-server multi-source FCFS  queueing model with Poisson arrivals and analyzed the average AoI of each source. We derived 1) an exact expression for the average AoI for a multi-source M/M/1 queueing model and 2) three approximate expressions for the average AoI for a multi-source M/G/1 queueing model. The simulation results showed that the approximate expressions for the average AoI are relatively accurate for different service time distributions.
In addition, the results pointed out the significance of the AoI as a metric in time-sensitive control applications: minimizing merely the average delay does not minimize the AoI.

\vspace{-5mm}
\section*{Acknowledgements}
{The authors would like to thank Prof. Roy Yates and the anonymous reviewers for pointing out errors in our initial manuscript and providing  invaluable suggestions for improving the paper.} 

 This research has been financially supported by the Infotech Oulu, the Academy of Finland (grant 323698), and Academy of Finland 6Genesis Flagship (grant 318927). M. Codreanu would like to acknowledge the support of the European Union's Horizon 2020 research and innovation programme under the Marie Sk\l{}odowska-Curie Grant Agreement No. 793402 (COMPRESS NETS). M. Moltafet would like to acknowledge the support of Finnish Foundation for Technology Promotion and HPY Research Foundation.

%

\appendices
{\section{Proof of Lemma \ref{lemmii1}, \ref{gh0fd0}, and \ref{nvbfh}}\label{Lemapp}
\subsection{Proof of Lemma \ref{lemmii1}} \label{Lem1app}
Using the facts that $T_{1,i-1}$ and $X_{1,i}$  are independent and the  PDF of $X_{1,i}$ is ${f_{X_{1,i}}(x)=\lambda_1 e^{-\lambda_1 x}}$,  $P(E^{\mathrm{B}}_{1,i})$ can be written as
	\begin{align}\label{nmhg}
	P(E^{\mathrm{B}}_{1,i})&=\int_{0}^{\infty}P(T_{1,i-1}\ge X_{1,i}|T_{1,i-1}=t)f_{T_{1,i-1}}(t)\mathrm{d}t\\&\nonumber
	=1-\int_{0}^{\infty} e^{-\lambda_1t}f_{T_{1,i-1}}(t)\mathrm{d}t
	\stackrel{(a)}{=}1-L_{T}(\lambda_1),
	\end{align}
	where	equality (a) follows because the system times of different packets are stochastically identical, i.e., $T_{1,i}=^{\mathrm{st}}T_{2,i}=^{\mathrm{st}}T$, $\forall{i}$ \cite{8187436,6284003}; 
	and $L_{T}(\lambda_1)$  denotes the Laplace transform of the PDF of the system time $T$ at $\lambda_1$. Because  $E^{\mathrm{L}}_{1,i}$ is the complementary  event of $E^{\mathrm{B}}_{1,i}$, we have
	\begin{equation}\label{nmhg1}
	P(E^{\mathrm{L}}_{1,i})=1-P(E^{\mathrm{B}}_{1,i})=L_{T}(\lambda_1).
	\end{equation}}

The relation between the Laplace transforms of the PDFs of the system time $T$ and service time $S$ is given as  \cite[Sect.~5.1.2]{daigle2005basic}
\begin{align}\label{0bn0v001b0012}
L_{T}(a)=\dfrac{\big(1-\rho\big)aL_{S}(a)}{a-\lambda\big(1-L_{S}(a)\big)}.
\end{align}
Finally, substituting \eqref{0bn0v001b0012} in \eqref{nmhg} and \eqref{nmhg1} results in the expressions \eqref{proffe01} and \eqref{proffe02}, respectively.

\subsection{Proof of Lemma \ref{gh0fd0}} \label{Lem2app}
To prove Lemma \ref{gh0fd0}, we use the fact that for random variables $ Y_1 $ and $ Y_2 $ and a certain event $ A $, the conditional PDF $f_{Y_1,Y_2|A}(y_1,y_2)$ is given by	\cite[Sect.~4.4]{papbook01}
\begin{align}\label{nvhg01002}
f_{Y_1,Y_2|A}(y_1,y_2)  =
\begin{cases}\dfrac{f_{Y_1,Y_2}(y_1,y_2)}{P(A)}&(y_1,y_2)\in A\\0&\text{otherwise}.\end{cases}
\end{align}	
In Lemma \ref{gh0fd0}, $ Y_1 $ and $ Y_2 $ are $X_{1,i}$ and $T_{1,i-1}$, respectively, which are two independent random variables, and event $A$ is $E^{\mathrm{B}}_{1,i}$.

\subsection{Proof of Lemma \ref{nvbfh}} \label{Lem3app}
According to the feature of the Laplace transform, for any function $f(y), y\ge 0$, we have \cite[Sect.~13.5] {rade2013mathematics}:
\begin{align}\label{nbhg01}
L_{\int_{0}^{y}f(b)\mathrm{d}b}(a)=\dfrac{L_{f(y)}(a)}{a}.
\end{align}
Therefore, using \eqref{dr001} and \eqref{nbhg01}, we have
\begin{align}
L_{x^2F_{T_1}(x)}(a)\bigg|_{a=\lambda_1}&= L_{x^2\int_{0}^{x}f_{T_1}(b)\mathrm{d}b}(a)\bigg|_{a=\lambda_1}=\dfrac{\mathrm{d}^2\bigg(\dfrac{L_{T}(a)}{a}\bigg)}{\mathrm{d}a^2}\bigg|_{a=\lambda_1}\\&\nonumber= \dfrac{a L''_{T}(a)-2L'_{T}(a)}{a^2}+\dfrac{2L_{T}(a)}{a^3}\bigg|_{a=\lambda_1}\stackrel{}{=} \dfrac{\lambda_1 L''_{T}(\lambda_1)-2L'_{T}(\lambda_1)}{\lambda_1^2}+\dfrac{2L_{T}(\lambda_1)}{\lambda_1^3}.
\end{align}

\bibliographystyle{IEEEtran}
\begin{spacing}{1.7}
\bibliography{RBibliography}
\end{spacing}

\end{document}